\newcommand{\datarule}{{\,:\!\!-\,}} %
\newtheorem{hypothesis}{Hypothesis}
\newcommand{\softO}{\tilde{O}}
\title{Lower Bounds for Conjunctive Query Evaluation} 
\author{Stefan Mengel}
\affiliation{		
    \institution{Univ.~Artois, CNRS, Centre de Recherche en Informatique de Lens (CRIL)}
    \city{Lens}
    \country{France}}
\email{mengel@cril.fr}
\keywords{todo}
\begin{document}
    
    \begin{abstract}
        In this tutorial, we will survey known results on the complexity of conjunctive query evaluation in different settings, ranging from Boolean queries over counting to more complex models like enumeration and direct access. A particular focus will be on showing how different relatively recent hypotheses from complexity theory connect to query answering and allow showing that known algorithms in several cases can likely not be improved.
    \end{abstract}

    \maketitle
    
\section{Introduction}\label{sec:introduction}

Understanding the complexity of different forms of query answering is one of the major research areas in database theory. Historically, the complexity of queries has been studied in three ways: in combined complexity, in data complexity~\cite{Vardi82}, or in parameterized complexity~\cite{Grohe02}. When trying to understand the necessary runtime to answer queries, combined complexity is very pessimistic: since the query and the database are both the inputs, answering first-order queries is PSPACE-hard and even conjunctive queries are NP-hard~\cite{ChandraM77}. On the other end of the spectrum, in data complexity all first-order queries are classified as in PTIME and thus easy~\cite{Vardi82}, which blurs the different complexities of different queries. So in a sense, both of these classical approaches are not completely satisfying when trying to assess the hardness of concrete queries.

A more refined approach is parameterized complexity~\cite{Grohe02}: similarly to data complexity, one considers the database as the main input, but instead of neglecting the query completely, one sees its size as a parameter. The runtime dependence on the parameter may be bad, but crucially it should not appear in the exponent of the input size, so roughly, for parameter $k$ and input size $n$, a runtime of $2^k n$ is good, but $n^k$ is bad. Parameterized complexity has led to many interesting results on the complexity of query answering, in particular the characterization of all \emph{classes} of conjunctive queries that can be solved efficiently from the perspective of parameterized and combined complexity~\cite{GroheSS01,Grohe07}. However, despite these merits, parameterized complexity has inherent limitations: it is only applicable to \emph{classes} of queries, and thus we cannot use it to learn something about the complexity of specific, concrete queries.

Fine-grained complexity opens up the possibility of attacking this question and determining the exact exponents of optimal query answering algorithms for individual queries. The idea of the area is to develop a complexity theory for problems inside PTIME to determine optimal runtimes for problems that are in classical theory all considered easy. The goal is to complement problems with algorithms that have runtime $O(n^{c})$ with lower bounds that rule out any algorithm with runtime $O(n^{c-\varepsilon})$ for all $\varepsilon > 0$.

Of course, with today's techniques, one cannot expect to show unconditional lower bounds. Also, it turns out that common classical assumptions like P $\ne$ NP or even the Exponential Time Hypothesis, which plays a major role in parameterized complexity, appear to not be sufficient. Instead, fine-grained complexity relies on its own set of different hypotheses. Often these posit that known algorithms for important problems that have, despite considerable efforts, not been improved for a long time, are optimal. So the results of fine-grained complexity theory are generally of the form ``If there is no better algorithm for problem X, then there is no better algorithms for problem Y, either.'' where X can be one of several different, well-studied problems and $Y$ is the problem that one is interested in. This makes fine-grained complexity a less coherent theory than, say, the theory of NP-completeness, and the confidence in its main hypotheses is certainly lower. However, it still provides techniques to give insights into which runtimes improvements would imply major, surprising algorithmic advances on well-known problems.

This paper is \emph{not} a survey of or introduction to fine-grained complexity itself---there are certainly others better qualified to do so\footnote{For example, there will be a keynote by Virginia Vassilevska Williams on this topic at PODS this year.} and this type of work already exists, see e.g.~\cite{VassilevskaWilliams2018} or the videos from a recent Simons Institute workshop that explain the area to a database theory audience~\cite{Simons23}. Instead, this paper will be limited to a single question: how can fine-grained complexity be used to determine the complexity of conjunctive query answering. Here query answering is not restricted to materializing the complete query result or solving Boolean queries, but also covers tasks like counting (as a prototypical aggregation task), enumeration and direct access. A large part of this paper will be on linear time bounds, as this is the case that is by far most studied. As we will see, there is a wealth of results on this. Afterwards, we will turn to super-linear runtimes. Unfortunately, lower bounds for this case are far less developed, but we will give an overview of what is known and which techniques exist. 
Throughout the whole paper, we will introduce important hypotheses from fine-grained complexity that have been proven useful in database theory and show how they have been applied.

\paragraph*{Acknowledgements} This paper was written to complement a tutorial that the author was invited to give at PODS 2025. The authors thanks the program committee for this opportunity to collect and present the results surveyed in this paper.

Florent Capelli and Christoph Berkholz provided thoughtful comments, advice and discussion on an earlier version of this paper. The author is grateful for the time and effort they invested into this, which greatly improved the presentation of this survey.

\section{Preliminaries}\label{sec:preliminaries}

\subsection{Database Queries}
We assume that the reader is familiar with the basics of database theory, in particular the relational model, see e.g.~the first part of~\cite{ArenasBLMP21}.
In this paper, we will focus only on \emph{conjunctive queries}, so queries of the form $q(X) \datarule R_1(X_1), \ldots, R_\ell(X_\ell)$, where $X$ and the $X_i$ are sets of variables with $X \subseteq \bigcup_{i \in [\ell]} X_i$. If $X = \bigcup_{i \in [\ell]} X_i$, then we call $q$ a \emph{join query}. If $X=\emptyset$, we call $q$ \emph{Boolean}. If all relation symbols $R_i$ in $q$ are different, we say that $q$ is \emph{self-join free}; if the same relation symbol repeats, we say that $q$ has self-joins. We assume basic familiarity with the AGM-bound~\cite{AtseriasGM13} and worst-case optimal join algorithms~\cite{Ngo18}, see also~\cite[Chapters 27 and 28]{ArenasBLMP21}. However, these concepts will not be central, and knowledge of their existence will be enough.

As is standard, we assign hypergraphs to queries: for the query $q(X)\datarule R_1(X_1), \ldots, R_\ell(X_\ell)$, the corresponding hypergraph $H=(V,E)$ has vertices $V:= \bigcup_{i \in [\ell]} X_i$ and the edge set $\{X_i\mid i \in [\ell]\}$.
A hypergraph $H= (V,E)$ is called \emph{acyclic} if the following process ends with a hypergraph with no vertices or edges: while possible, delete a vertex that is contained in at most one edge, or delete an edge that is a subset of another edge. A conjunctive query is called acyclic if its hypergraph is acyclic. Acyclic queries are important and well studied because of their good algorithmic properties, see e.g.~\cite[Chapters 20 and 21]{ArenasBLMP21}. For background and many properties of acyclic hypergraphs, see~\cite{BeeriFMY83,Brault-Baron16,BerkholzGS20}. We say that a hypergraph is \emph{$h$-uniform} for some $h\in \mathbb N$ if all of its edges have size $h$.

All runtimes for query answering that we give will be for fixed queries and thus have no dependence on the size of the query (but will of course depend on the actual query; we want to differentiate hard queries from easy ones after all). We generally denote the size of the input database, i.e.~the overall number of tuples, by $m$ while $n$ is generally used for the size of the domain.

\subsection{Fine-Grained Complexity}

As said in the introduction, fine-grained complexity is concerned with finding the optimal exact exponent of runtime bounds for problems that can be solved in polynomial time. The considered machine model are standard random access machines with logarithmic word-size and unit cost operations, see e.g.~\cite{GrandjeanJ22} for a detailed discussion of this model.

We will mostly be interested in lower bounds saying that for a constant $c$ there is no $\varepsilon > 0$ such that a problem can be solved in time $O(n^{c-\varepsilon})$. These lower bounds will be based on different hypotheses which we will introduce throughout the paper. Since we mostly ignore sub-polynomial factors, we will give most runtime bounds in $\softO$-notation, which is analogous $O$-notation, but ignores factors of the form $n^{o(1)}$~\footnote{In the literature, $\softO$ often only ignores polylogarithmic instead of subpolynomial factors. However, since in fine-grained complexity one generally ignores all subpolynomial factors, our slightly more liberal definition will be more convenient.}

While we will not see this here, many algorithms and reductions in fine-grained complexity are randomized, and they might fail with a probability $p< \frac{1}{2}$. The probability $p$ can be decreased in a standard fashion by repeatedly running the algorithm. All our reductions here will be deterministic, but it is still good to keep randomization in mind when reading papers in fine-grained complexity. In particular, all hypotheses we mention here are in general also made in this randomized model, sometimes without mentioning it explicitly.

\subsection{The Complexity of Matrix Multiplication}

Fast matrix multipliation is one of the work horses of efficient algorithms in fine-grained complexity, and it also increasingly plays a major role in query evaluation~\cite{AmossenP09,AboKhamisHS24,Hu24}. So let us discuss this basic operation a little. 

With the naive algorithm, one can multiply two $n\times n$-matrices in time $O(n^3)$, which was thought to be optimal until the breakthrough work of Strassen~\cite{Strassen69}, showing that the problem has a sub-cubic algorithm. Since then there has been a race to improve the runtime bounds.
The decisive quantity here is the \emph{matrix multiplication exponent} $\omega$ which is the smallest real number such that there is an algorithm that computes the product of two $n\times n$-matrices with $n^{\omega + o(1)}$ operations\footnote{Note the following slight subtlety: $\omega$ is defined as a limit, and it is known that there is no single algorithm that attains this limit~\cite{CoppersmithW82}. So whatever the value of $\omega$ is, there is no algorithm with runtime $O(n^\omega)$. However, since we will use our version of the $\softO$-notation that does not take into account factors of the form $n^{o(1)}$, we ignore this fact in the remainder of this paper.}. From Strassen's algorithm and the fact that we need to compute $n^2$ entries of the output, it follows that $2\le \omega < 3$ and, while the exact number is not known, there have been regular improvements of the upper bound in recent years, the current record being $\omega < 2.371339$~\cite{AlmanDWXXZ25}.

The best known lower bound for matrix multiplication is quadratic in $n$~\cite{Blaser03}, and it does not seem to be a widely held assumption that $\omega > 2$. In fact, many works introducing algorithms that use matrix multiplication as a sub-routine and whose runtime thus depends on $\omega$ in more or less complicated ways, also give resulting runtimes for the case $\omega = 2$. That said, while there has been steady progress over decades since~\cite{Strassen69}, it appears likely that we are far away from determining $\omega$.

We will mostly be concerned with \emph{Boolean matrix multiplication}, i.e.~matrix multiplication over the Boolean semi-ring where, given two input matrices $A=(a_{ij})$, $B=(b_{ij})$, both from $\{0,1\}^{n\times n}$ we want to compute the matrix $C=(c_{ij})$ with $c_{ij} := \bigvee_{k\in [n]} a_{ik} \land b_{kj}$. So far, the best algorithms for Boolean matrix multiplication in fact use multiplication over the reals: given $A$ and $B$, interpret their entries as real numbers and multiply them over the reals. Then, it is easy to see that substituting any non-zero entry of the output $C$ by $1$ gives the result of Boolean matrix multiplication. Note that the runtime of this approach is $\softO(n^\omega)$, so the same as that of non-Boolean matrix multiplication.

Of particular interest to us is a setting often called \emph{sparse} Boolean matrix multiplication where the runtime is not measured in the dimension $n$ but in the number $m$ of non-zero entries of the input and output. Note that if there is an algorithm of runtime $\softO(m)$ then $\omega = 2$.
However, the other way round, even if $\omega =2$, it does not follow that sparse matrix multiplication has a (near)-linear-time algorithm,  since the input and output can have far less than $n^2$ non-zero entries. In that case, algorithms for dense matrix multiplication and $\omega =2$ are not obviously helpful. That said, there are several algorithms for sparse matrix multiplication that use algorithms for dense matrix multiplication in more subtle ways: the current best runtime is an algorithm for sparse Boolean matrix multiplication with runtime $O(m^{1.3459})$ which in case $\omega = 2$ is even $O(m^{4/3 + \varepsilon})$ for every $\varepsilon > 0$~\cite{AbboudBFK24}. The general belief in fine-grained complexity is that $O(m^{4/3})$ can likely not be beaten as a runtime, see again~\cite{AbboudBFK24}, and that in particular there is no linear-time algorithm for sparse Boolean matrix multiplication.

\AP \begin{hypothesis}[\intro {Sparse Boolean Matrix Multiplication Hypothesis}]
    There is no algorithm that solves sparse Boolean matrix multiplication in time $\softO(m)$.
\end{hypothesis}

\section{Understanding Linear-Time Query Answering} \label{sec:linear}

In this section, we will turn to the complexity of different query answering problems. Since the case of (nearly) linear time is the best understood, we will focus on understanding it first. In Section~\ref{sct:superlinear}, we will turn to the less well understood superlinear case. We will start our considerations with Boolean queries before considering more complex questions like enumeration, counting and direct access in later subsections.

\subsection{Boolean Queries}

We first consider Boolean queries, since any lower bound that we can show for them will in particular also give us lower bounds for other question like counting or enumeration. It will turn out that in this whole section, acyclic queries will play a major role, due to the following classical result by Yannakakis~\cite{Yannakakis81}.

\begin{theorem}\label{thm:yannakakis}
    For every acyclic Boolean conjunctive query $q$ there is an algorithm, given a database~$D$ decides in linear time if $q$ it true on $D$.
\end{theorem}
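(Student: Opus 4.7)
The plan is to use the classical Yannakakis semi-join algorithm, whose correctness rests on the join tree characterization of acyclicity. Recall (see e.g.~\cite{BeeriFMY83}) that a hypergraph is acyclic in the sense used here iff it admits a \emph{join tree}: a tree $T$ whose nodes are in bijection with the hyperedges and which satisfies the running intersection property, i.e.\ for every vertex $x$ the set of nodes whose hyperedge contains $x$ induces a connected subtree of $T$. Since $q$ is fixed, the join tree can be built in constant time by, e.g., iterating the GYO ear-removal process.

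Next, orient $T$ towards an arbitrarily chosen root and process the tree bottom-up. For every non-root node $v$ with parent $u$, replace the relation $R_u$ by the semi-join $R_u \ltimes R_v$: keep only those tuples of $R_u$ whose projection on the shared variables $X_u \cap X_v$ appears in the projection of $R_v$. This can be implemented in time $O(|R_u|+|R_v|)$ by hashing the tuples of $R_v$ on the shared attributes (the hash table has size depending only on the arities, which are constant). Since $T$ has a constant number of edges, the whole bottom-up pass runs in time $O(m)$. At the end, output \textbf{true} iff the root relation is non-empty (equivalently, iff no relation became empty during the pass).

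For correctness, one direction is trivial: a semi-join only removes tuples from $R_u$ that cannot be extended using $R_v$, so every homomorphism witnessing $q(D)=\mathbf{true}$ projects to surviving tuples in every relation. The converse is the substantive part: given a tuple $t_{\text{root}}$ that survives in the root, one has to show that it can be extended to a full homomorphism from $q$ to $D$. This is proved by descending $T$ and, at each visited node $v$ with a already-fixed partial assignment on $X_u \cap X_v$, picking any tuple of the reduced $R_v$ that agrees with that partial assignment—such a tuple exists precisely because $R_u$ had passed the semi-join against $R_v$. The running intersection property is what guarantees that choices made in different subtrees never clash: any variable appearing in two nodes also appears on the entire path between them, so its value is fixed once and for all the first time it is encountered.

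The main obstacle is this extension argument; the runtime analysis and the semi-join implementation are routine once the join tree is in hand. A minor subtlety to state carefully is that constants hidden in $O(\cdot)$ depend on $q$ (number of atoms, arities, and the shape of $T$), which is in line with the convention fixed earlier in the paper that runtimes are given for fixed queries.
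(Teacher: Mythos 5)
Correct, and this is exactly the classical Yannakakis semi-join argument that the paper cites (it defers to \cite[Chapter~21]{ArenasBLMP21} rather than reproving it). One cosmetic nit: your parenthetical ``equivalently, iff no relation became empty during the pass'' does not cover the case where the root relation is empty in the input to begin with, but the main criterion (root non-empty after the bottom-up pass) is the right one and the rest of the argument, including the use of the running intersection property to resolve potential clashes during the top-down extension, is sound.
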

For a streamlined proof of Theorem~\ref{thm:yannakakis}, see~\cite[Chapter~21]{ArenasBLMP21}. In the remainder of this subsection, we will explain why acyclic queries are likely the \emph{only} Boolean queries that can be solved in linear time and which hypotheses from fine-grained complexity suggest this.

\subsubsection{Graphlike Queries}\label{sct:boolean}

It will be useful to first consider the case in which all atoms have arity two, so where the hypergraph of the query is a graph. Acyclicity of the query is then simply the usual notion of acyclicity of graphs, so the absence of any cycles. 

We want to make it plausible that cyclic graphlike queries cannot be solved in linear time, so it appears useful to first consider queries that consist only of a single cycle and, to make things even easier, we for now consider only the triangle query which has been studied extensively in the database literature, see e.g.~the pointers in~\cite{Ngo18}. We write it here as
\begin{align*}
    q^\triangle() \datarule R_1(x,y), R_2(y,z), R_3(z,x).
\end{align*}

What can we say about the complexity of $q^\triangle$? First, if we consider the corresponding join query
\begin{align*}
    \bar{q}^\triangle(x,y,z) \datarule R_1(x,y), R_2(y,z), R_3(z,x),
\end{align*}
then we see by applying the celebrated AGM-bound~\cite{AtseriasGM13} that $|\bar q(D)| \le m^{3/2}$ for any database of size $m$, and using a worst-case optimal join algorithm~\cite{Ngo18}, we can compute all answers in time $\softO(m^{3/2})$. So in particular, $q^\triangle$ can be solved in time $\softO(m^{3/2})$. But can we do better? For $\bar q^\triangle$ we can certainly not: the AGM-bound is tight, so there are inputs with $\Omega(m^{3/2})$ output tuples, and since we have to return them all, we cannot do so in time $m^{3/2-\varepsilon}$ for any $\varepsilon>0$. However, if we only want to know if there is an answer, we can use matrix multiplication to speed up query answering, as shown by Alon, Yuster and Zwick~\cite{AlonYZ97}. Since the proof if short but quite instructive, we give it here.

\begin{theorem}\label{thm:AYZ}
    If there is an algorithm for Boolean matrix multiplication of $n\times n$-matrices with runtime $\softO(n^\omega)$, then there is an algorithm that answers $q^\triangle$ on inputs of size $m$ in time $\softO(m^{\frac{2\omega}{\omega+1}})$.
\end{theorem}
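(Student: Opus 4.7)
My plan is to reinterpret $q^\triangle$ as triangle detection in a tripartite graph and then apply the classical high-/low-degree vertex splitting combined with fast Boolean matrix multiplication. Concretely, I would form the tripartite graph $G$ whose colour classes $V_1, V_2, V_3$ are the active domains of the three relations and whose edge sets are $R_1 \subseteq V_1\times V_2$, $R_2\subseteq V_2 \times V_3$, $R_3 \subseteq V_3 \times V_1$; every triangle of $G$ must use one edge from each $R_i$ and one vertex from each class, so triangles in $G$ correspond bijectively to satisfying assignments of $q^\triangle$. The graph has at most $m$ edges.

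Next, fix a threshold $\Delta$ to be optimised later, and call a vertex of $G$ heavy if its degree exceeds $\Delta$ and light otherwise. Since $\sum_v \deg(v) \leq 2m$, there are at most $2m/\Delta$ heavy vertices. Any triangle containing a light vertex $v$ is detected by enumerating the pairs of neighbours of $v$ and testing adjacency via a hash table built once on $R_1 \cup R_2 \cup R_3$, costing $O(\deg(v)^2) = O(\deg(v)\,\Delta)$ per light vertex and therefore $O(m\Delta)$ overall. Triangles whose three vertices are all heavy are handled by matrix multiplication: restrict $R_1, R_2, R_3$ to heavy vertices to obtain Boolean adjacency matrices $A_{12}, A_{23}, A_{31}$ of dimension at most $2m/\Delta$, compute $B := A_{12}\cdot A_{23}$ in time $\softO((m/\Delta)^\omega)$ using the assumed $\softO(n^\omega)$ algorithm, and then scan the at most $m$ nonzero entries of $A_{31}$ looking for some $(z,x)$ with $B[x,z]=1$. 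Together, the two cases cover every triangle of $G$.

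The total runtime is $O(m\Delta) + \softO((m/\Delta)^\omega)$; setting the two terms equal yields $\Delta^{\omega+1} = m^{\omega-1}$, i.e.\ $\Delta = m^{(\omega-1)/(\omega+1)}$, and substituting back gives the claimed $\softO(m^{2\omega/(\omega+1)})$ bound. The conceptual point is that restricting to the heavy subgraph forces the matrix dimension down to $O(m/\Delta)$, which is exactly what makes fast Boolean matrix multiplication pay off against the naive worst-case-optimal-join bound of $\softO(m^{3/2})$. The only mildly delicate step is the light-case book-keeping, where one bounds $\sum_v \deg(v)^2$ over light $v$ by $\Delta \sum_v \deg(v) = O(m\Delta)$; the rest is a straightforward balancing of two cost terms.
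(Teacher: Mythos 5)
Your proposal is correct and follows essentially the same approach as the paper: a heavy/light degree split with threshold $\Delta = m^{(\omega-1)/(\omega+1)}$, handling light-vertex triangles by bounded enumeration in $O(m\Delta)$ and all-heavy triangles by Boolean matrix multiplication on a $O(m/\Delta)$-dimensional matrix. The only cosmetic difference is that you recast $q^\triangle$ as tripartite triangle detection and bound the light case via $\sum_{v\text{ light}}\deg(v)^2 \le \Delta\sum_v \deg(v)$, whereas the paper stays in database vocabulary and extends tuples of $R_1$ with a light $y$-value to at most $\Delta$ choices of $z$; these are the same computation.
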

\begin{proof}
    We use a technique based on degree splits. To this end, define the degree of an element of the active domain as the number of tuples in which it appears in the input database. Then we call a domain element light if it has degree at most $\Delta := m^{\frac{\omega-1}{\omega + 1}}$ and heavy otherwise. We will first show how to detect answers of $\bar q^\triangle$ that contain a light element, say, for the variable $y$. To do so, we first compute the answers of $q_y(x,y,z)\datarule R_1(x,y), R_2(y,z)$ in which $y$ takes a light element. We can do this by first filtering out all heavy elements at the position of $y$ in the relations $R_1$ and $R_2$, then we take all remaining tuples in $R_1$ and extend them in the up to $\Delta$ ways to $z$. Certainly, this can be done in time $\softO(m\Delta)$. In the final step, we can filter with the relation $R_3$ to check if any of the candidates thus computed is an answer to $\bar q^\triangle$. Analogously, we can check if there is an answer to $\bar q^\triangle$ that takes a light value in $x$ or $z$.
    
    It remains to check for answers that only contain heavy values. There are at most $m/\Delta$ such values and we can compute them efficiently, so we only have to show how to solve $q^\triangle$ on a database with up to $n':= m/\Delta$ domain elements. So let $R_1'$ and $R_2'$ be the relations we get from $R_1$ and $R_2$, respectively, by restriction to heavy values. We first solve the query $q'(x,z)\datarule R_1'(x,y), R_2'(y,z)$ by matrix multiplication: let $A$ be the adjacency matrix of $R_1'$, so the Boolean $n'\times n'$-matrix indexed by the active domain which contains a $1$ in an position $(a,b)$ if and only if $(a,b)\in R_1'$. Define $B$ analogously as the adjacency matrix of $R_2'$. Let $C$ be the result of Boolean matrix multiplication of $A$ and $B$, then $C$ is the adjacency matrix of the relation containing the answers to $q'$ and thus we can read off the answers directly from $C$. Filtering by $R_3$ again, lets us check if $q^\triangle$ has an answer.

    The overall runtime of the algorithm is $\softO(m\Delta) = \softO(m^{\frac{2\omega}{\omega+1}})$ for the first part of the algorithm and $\softO(n'^\omega) = \softO(\left(\frac{m}{\Delta}\right)^\omega)= \softO(m^{\frac{2\omega}{\omega+1}})$ for the second part, which gives the desired runtime bound.
\end{proof}

For $\omega > 0$, the mapping $\omega \mapsto \frac{2\omega}{\omega + 1}$ is monotone and increasing in $\omega$, so better matrix multiplication algorithms give better algorithms for the triangle query. If $\omega = 2$, then the runtime is $\softO(m^{\frac{4}{3}})$.
Despite intense research, there is no known better algorithm for triangle finding in graphs than that from Theorem~\ref{thm:AYZ}, which motivates the following common hypothesis found e.g.~in~\cite{AbboudW14}.
\AP \begin{hypothesis}[\intro{Triangle Hypothesis}]
    There is no algorithm that, given an input graph $G$ with~$m$ edges, decides in time $\softO(m)$ if $G$ contains a triangle.
\end{hypothesis}
A common more concrete lower bound conjecture for triangle finding is $\Omega(m^{4/3})$, which, as we have seen, is tight in case $\omega = 2$.

For conjunctive query answering, we get the following result.

\begin{proposition}\label{prop:embed-triangle}
    Let $q$ be cyclic self-join free Boolean conjunctive query in which all relations have arity $2$. Then, assuming the \kl{Triangle Hypothesis}, there is no algorithm that, given a database $D$ of size $m$, decides~$q$ on $D$ in time $\softO(m)$.
\end{proposition}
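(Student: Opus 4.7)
The plan is to reduce triangle detection to deciding $q$. Since $q$ is cyclic and all atoms have arity $2$, the query graph $H_q$ contains a cycle, and I pick a shortest one $C = v_0 v_1 \cdots v_{k-1} v_0$. Being shortest, $C$ has length $k \geq 3$ and is chordless: any atom of $q$ whose two variables both lie on $C$ must therefore be one of the $k$ cycle atoms, so every other atom touches at most one variable of $C$. I split $C$ into three consecutive arcs $P_1, P_2, P_3$ of lengths $k_1, k_2, k_3 \geq 1$ with $k_1 + k_2 + k_3 = k$, and call $v_0, v_{k_1}, v_{k_1+k_2}$ the three corners.

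Given a triangle instance $G = (V, E)$ with $|E| = m$, I build a database $D$ as follows. For each edge $e = (u, v) \in E$ and each arc $P_i$ (oriented from its starting corner to its ending one), I introduce $k_i - 1$ fresh ``internal'' values $w^i_{e, 1}, \ldots, w^i_{e, k_i - 1}$ and populate the $k_i$ relations along $P_i$ so that the sequence $u, w^i_{e, 1}, \ldots, w^i_{e, k_i - 1}, v$ satisfies the consecutive atoms of $P_i$. Since $k$ is a constant of $q$, this contributes $O(m)$ tuples. I then add a single sentinel value $*$ intended to absorb the non-cycle variables: for each non-cycle atom $R_j(x, y)$, I add $(*, *)$ if both $x, y$ are off $C$, and otherwise $(d, *)$ or $(*, d)$ (depending on which variable is on $C$) for every element $d$ of the active domain, which has size $O(m)$. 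In total, $|D| = O(m)$.

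A triangle $\{a, b, c\} \subseteq V$ in $G$ yields a homomorphism from $q$ to $D$ by sending the three corners to $a, b, c$, each internal cycle variable to the matching $w$-value along its arc, and every non-cycle variable to $*$. Conversely, since no cycle relation contains $*$ and every internal value appears in exactly one tuple of each of its two cycle relations, any homomorphism must send each arc onto one of the inserted ``edge paths'', forcing the three corner images to form a triangle in $G$. Hence an $\softO(m)$ algorithm for $q$ would decide triangle on $G$ in $\softO(m)$, contradicting the \kl{Triangle Hypothesis}.

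The main delicate point is routing the non-cycle part of $q$ through the sentinel $*$ without polluting the cycle relations or blowing up $|D|$; chordlessness of the shortest cycle $C$, which guarantees that each non-cycle atom touches at most one cycle variable, is exactly what makes the sentinel construction go through cleanly. Minor degenerate cases such as atoms of the form $R(x, x)$ or multiple atoms on the same cycle edge are handled by straightforward variants (populate duplicates identically; treat self-loops separately) and do not affect the linear size bound.
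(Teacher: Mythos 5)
Your proof is correct and follows essentially the same reduction as the paper: reduce from triangle detection, embed the triangle into an induced (chordless) cycle of the query graph, and route all off-cycle atoms through a single dummy element. The only cosmetic difference is that the paper picks three atoms on the induced cycle to hold $E$ and sets the remaining cycle atoms to the equality relation on $V$, whereas you contract the cycle to a triangle via explicit path gadgets with fresh per-edge intermediate values; both constructions are $O(m)$ and yield the same conclusion.
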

\begin{proof}
    We reduce from triangle finding, so let $G= (V,E)$ be the input graph. We construct a database $D$ with active domain $V \cup \{d\}$ where $d$ is a dummy element not in $V$. Fix an induced cycle in the graph of the query $q$, which exists because $q$ is cyclic. Let $R_1(x_1, x_2)$, $R_2(x_3, x_4)$, $R_3(x_5, x_6)$ be different atoms on the cycle, which might overlap in their endpoints. We set all $R_1^D = R_2^D = R_3^D = E$ and all other relations for the atoms on the cycle to the equality relation on $V$. For all relations $R(x_j, y)$ where $x_j$ is on the cycle and $y$ is not, we set $R^{D}:= V \times \{d\}$ and proceed analogously for the atoms $R_i(y, x_j)$. Finally, for all atoms $R(y_1, y_2)$ where neither $y_1$ nor $y_2$ is on the cycle, we set the input relation to $\{d\}^2$.
    
    Clearly, $D\models q$ if and only if $G$ has a triangle, so the claim follows directly.
\end{proof}

\subsubsection{General Queries}

As we have seen in the last section, in the graph-like case, i.e.~the case of arity $2$, having a cycle is the only cause for hardness of Boolean queries. In the general case, we will now see another hard substructure.
\begin{example}[Loomis-Whitney Joins]
    The $k$-dimensional Loomis-Whitney query is the query in variables $X_k:=\{x_1, \ldots, x_k\}$ defined by 
    \begin{align*}
        q^{LW}_k \datarule \bigwedge_{X\subseteq X_k: |X| = k-1} R_X(X).
    \end{align*}
    Note that for $k>3$ no Loomis-Whitney query contains an induced cycle, since all variable sets of size at most $k-1$ are contained in the scope of one atom. Moreover, it is known that any Loomis-Whitney query can be answered in time $\softO(m^{1+ \frac{1}{k-1}})$~\cite{NgoPRR18} which for $k \ge 5$ is faster than the state of the art for triangles. 
\end{example}

We will next give evidence that the known algorithms for Loomis-Whitney queries might be optimal, building on hypotheses from fine-grained complexity. To this end, define a hyperclique in a $h$-uniform hypergraph $H= (V,E)$ be a set of vertices $V'\subseteq V$ such that every set $S\subseteq V'$ of size $h$ is contained in $E$.
\AP \begin{hypothesis}[\intro{Hyperclique Hypothesis}]\label{hyp:hyperclique}
    For no pair $k>h>2$ of integers, there is an $\epsilon > 0$ and an algorithm that, given a $h$-uniform hypergraph $H$ with $n$ vertices, decides in time $\softO(n^{k-\varepsilon})$ if $H$ contains a hyperclique of size $k$.
\end{hypothesis} 
Note that the condition $h>2$ is necessary since for triangles and more general $k$-cliques in graphs there are algorithms with runtime $O(n^{ck})$ for some $c<1$ based on matrix multiplication~\cite{NesetrilP85}, see also Section~\ref{sct:clique}. In contrast, despite intensive search, no such algorithms for hyperclique in $h$-regular hypergraphs are known for any $h>2$. In particular, it is known that there is no generalization of efficient matrix multiplication to higher order tensors that would allow adapting the algorithms for $k$-clique in a direct way~\cite{LincolnWW18}. Moreover, improving algorithms for hypercliques would give an improvement for Max-$k$-SAT~\cite{LincolnWW18}, a problem that, in contrast to Max-$2$-SAT and $k$-SAT, has so far resisted all tries to improve upon the trivial runtime $\softO(2^n)$ for $n$-variable CNF-formulas.

Since Loomis-Whitney queries can be used to find cliques of size $k$ in $(k-1)$-regular hypergraphs, we directly get the following lower bound that the algorithm from~\cite{NgoPRR18} is optimal.

\begin{theorem}
    Assuming the \kl{Hyperclique Hypothesis}, for no $k>4$ and no $\varepsilon >0$, there is an algorithm that, given an input databaseof size $m$, decides $q^{LW}_k$ in time $\softO(m^{1+\frac{1}{k-1}-\varepsilon})$.
\end{theorem}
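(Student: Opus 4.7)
The plan is to reduce $k$-hyperclique detection in $(k-1)$-uniform hypergraphs to deciding $q^{LW}_k$. Given an input hypergraph $H=(V,E)$ on $n$ vertices, I would build a database $D$ on domain $V$ whose relation $R_X^D$, one for each $(k-1)$-subset $X \subseteq X_k$, encodes the edge set $E$: after fixing an ordering $i_1 < \cdots < i_{k-1}$ of the indices of the variables in $X$, put every ordering $(v_1, \ldots, v_{k-1})$ of every hyperedge $\{v_1,\ldots,v_{k-1}\} \in E$ into $R_X^D$. Each hyperedge contributes $(k-1)!$ tuples per relation and there are $k$ relations, so the total database size is $m = O(|E|) = O(n^{k-1})$ with a hidden constant depending only on $k$, and the construction runs in time linear in $m$.

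The next step is to verify that $D \models q^{LW}_k$ if and only if $H$ contains a $k$-hyperclique. One direction is immediate from the construction: a hyperclique $\{v_1, \ldots, v_k\}$ yields a satisfying assignment $x_i \mapsto v_i$ because every $(k-1)$-subset of the hyperclique is an edge of $H$ and all orderings were included. For the converse, the one subtle point is that any satisfying assignment must map $x_1, \ldots, x_k$ to pairwise distinct vertices: if two variables took the same value, then any atom $R_X$ containing both in its scope---such an atom exists because $X$ is obtained by omitting a single variable from $X_k$ and $k \ge 3$---would require a tuple with a repeated entry, contradicting the fact that hyperedges are sets of distinct vertices. Once injectivity is in hand, the $k$ satisfied atoms force every $(k-1)$-subset of the image to lie in $E$, producing a $k$-hyperclique.

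Finally, I would plug the reduction into the assumed algorithm: if $q^{LW}_k$ could be decided in time $\softO(m^{1+\frac{1}{k-1}-\varepsilon})$ for some $\varepsilon > 0$, then using $m = O(n^{k-1})$ we obtain a $\softO(n^{k-(k-1)\varepsilon})$-time algorithm for $k$-hyperclique in $(k-1)$-uniform hypergraphs. Since $k > 4$ gives $k-1 > 2$, the pair $(k, k-1)$ meets the premise $k > h > 2$ of the \kl{Hyperclique Hypothesis}, and the positive saving $(k-1)\varepsilon$ in the exponent contradicts it. The reduction is essentially forced and the only step that needs real care is ruling out non-injective assignments; this is free here because the Loomis-Whitney query has an atom on every $(k-1)$-subset of its variables, so every pair of query variables co-occurs in some atom---I would flag this as the conceptual ingredient that would need to be replaced in reductions for queries with a sparser atom structure.
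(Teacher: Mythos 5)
Your proposal is correct and follows essentially the same route as the paper: both reductions populate each $R_X$ with all orderings of all hyperedges, observe the database has size $O(n^{k-1})$, and plug this into the assumed running time to get a $\softO(n^{k-(k-1)\varepsilon})$ hyperclique algorithm that violates the \kl{Hyperclique Hypothesis}. You are somewhat more careful than the paper in spelling out why satisfying assignments must be injective (every pair of query variables co-occurs in some atom, and tuples built from hyperedges have distinct entries), a point the paper's proof leaves implicit when it asserts the equivalence between $D \models q^{LW}_k$ and the existence of a hyperclique.
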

\begin{proof}
    Assume by way of contradiction that there are $k>4$ and $\varepsilon$ such that there is an algorithm for $q^{LW}_k$ with the stated runtime. We will solve the hyperclique problem in $(k-1)$-regular hypergraphs with the help of $q^{LW}_k$. Let $H=(V,E)$ be a $(k-1)$-regular hypergraph. We define a relation $R$ that contains for all edges $e\in E$ all tuples we get by permutation of the elements in $e$. Let~$R$ be the relation for every $R_X$, then $q^{LW}_k$ is true on $D$ if and only if $H$ has a hyperclique of size~$k$, so $q^{LW}_k$ lets us find $(k-1)$-hypercliques, as desired.
    
    Let $n:= |V|$, then $R$ has size at most $n^{k-1}$. So overall, the runtime of the algorithm is at most $\softO\left(\left(n^{k-1}\right)^{1+ \frac{1}{k-1}-\varepsilon}\right) = \softO\left(n^{k - (k-1)\varepsilon}\right)$, which contradicts the \kl{Hyperclique Hypothesis}.
\end{proof}
It follows easily that queries that contain Loomis-Whitney queries as subqueries cannot be answered in linear time under reasonable assumptions. Interestingly, this yields a characterization of all non-linear time queries due to the following result by Brault-Baron~\cite{BraultBaron13}.

\begin{theorem}[\cite{BraultBaron13}]\label{thm:brault-baron}
    Let $H$ be a hypergraph that is not acyclic. Then there is a set $S$ such that the induced hypergraph $H[S]$ is a cycle or we can get a $(|S|-1)$-uniform hyperclique from $H[S]$ by deleting edges that are completely contained in other edges.
\end{theorem}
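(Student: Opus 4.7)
The plan is to combine the GYO reduction that defines acyclicity with a minimality argument on the resulting ``core'' sub-hypergraph. Since $H$ is not acyclic, applying the GYO rules (deleting a vertex contained in at most one edge, or deleting an edge contained in another) terminates at a non-empty \emph{reduced} sub-hypergraph $H^*$: every vertex of $H^*$ belongs to at least two of its edges, and no edge is properly contained in another. Crucially, if we pick $S$ inside the vertex set of $H^*$, then the induced sub-hypergraph $H[S]$ equals $H^*[S]$ up to edges that are contained in other edges of $H[S]$, so the ``deleting edges contained in other edges'' in the statement precisely lets us pass to $H^*[S]$.

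Next, I would take $S$ to be a vertex set of minimum cardinality such that $H^*[S]$ is still non-acyclic; this is well-defined since $H^*$ itself is non-acyclic. With this choice, $H^*[S]$ is a \emph{minimal} non-acyclic hypergraph in the sense that removing any single vertex yields an acyclic hypergraph. The task then reduces to proving the following combinatorial claim: every minimal non-acyclic hypergraph on $k$ vertices is, after removing edges that are subsets of other edges, either a chordless graph cycle of length $k$ or a $(k-1)$-uniform hyperclique on $k$ vertices.

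To prove this dichotomy, I would pick an arbitrary vertex $v \in S$ and exploit the fact that $H^*[S \setminus \{v\}]$ is acyclic and hence admits a GYO elimination order. Putting $v$ back must make this order fail; tracing where it fails forces the edges incident to $v$ to have a very rigid shape. If some vertex of $S\setminus\{v\}$ can still be eliminated first, an easy induction shows that $v$ together with a chain of other vertices realizes an induced graph cycle, which is then all of $S$ by minimality. Otherwise, every vertex distinct from $v$ is blocked from elimination, which forces the existence of an edge missing each such vertex, giving the $(k-1)$-uniform hyperclique pattern. Running this argument symmetrically over all choices of $v$ forces all surviving edges to have the same size, either $2$ or $|S|-1$.

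The main obstacle is ruling out mixed configurations with edges of intermediate size. Minimality alone does not immediately exclude, say, a reduced hypergraph whose edges come in two different sizes strictly between $2$ and $|S|-1$; one has to combine minimality with reducedness to show that any such edge could be removed by a GYO-style argument on a suitable sub-hypergraph, contradicting minimality. This clean-up step is the combinatorial heart of Brault-Baron's original proof~\cite{BraultBaron13}, and I would follow its structure to complete the dichotomy.
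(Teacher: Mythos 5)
The paper does not prove this statement; it cites it to Brault-Baron~\cite{BraultBaron13}, so the comparison must be against that source, whose proof rests on Graham's characterization of $\alpha$-acyclicity: a hypergraph is acyclic if and only if it is \emph{conformal} (every clique of the primal graph is covered by a hyperedge) and its primal graph is \emph{chordal}. Given a non-acyclic $H$, one of these two properties fails. If the primal graph is not chordal, pick a chordless cycle $C$ of length at least $4$; since $C$ has no chord, every hyperedge meets $C$ in at most two consecutive vertices, so $H[C]$ reduces (after deleting contained edges) to the graph cycle on $C$. If the primal graph is chordal but $H$ is not conformal, pick an inclusion-minimal clique $K$ of the primal graph not covered by any hyperedge; by minimality every $(|K|-1)$-subset of $K$ is covered, while no hyperedge contains all of $K$, so $H[K]$ reduces to the $(|K|-1)$-uniform hyperclique on $K$. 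This cleanly yields the dichotomy for a single set $S$ (namely $C$ or $K$).

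Your GYO-plus-minimality setup (reduce to a minimum-cardinality vertex set $S$ such that $H^*[S]$ is non-acyclic, and observe that passing between $H[S]$ and $H^*[S]$ only changes contained edges) is sound as a reduction, but the case analysis you sketch for the resulting minimal instance does not work as stated. The key point you miss is that minimality of $S$ forces the stronger property that \emph{no} vertex of $S$ can ever be deleted in a GYO run on $H^*[S]$: deleting a vertex $u$ (even after first deleting some contained edges) would show that $H^*[S\setminus\{u\}]$ is still non-acyclic, contradicting minimality. Hence your first branch, ``some vertex of $S\setminus\{v\}$ can still be eliminated first,'' is vacuous, and the argument collapses into the second branch, which only produces the hyperclique pattern. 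In particular your sketch cannot produce chordless cycles of length at least $4$, which are genuinely distinct from hypercliques. Repairing this requires replacing the GYO-order tracing with a structural dichotomy such as the conformality/chordality split above; minimality plus reducedness alone does not give you the ``rigid shape'' of the edges without it.
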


Brault-Baron inferred the following characterization of linear-time decidable Boolean queries.

\begin{theorem}\label{thm:booleanLower}
    Assume the \kl{Triangle Hypothesis} and the \kl{Hyperclique Hypothesis}. Then there is a linear-time evaluation algorithm for a self-join free Boolean conjunctive query $q$ if and only if $q$ is acyclic.
\end{theorem}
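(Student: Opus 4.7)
The plan is to prove both directions directly. The forward implication---that every acyclic self-join free CQ admits a linear-time evaluation algorithm---is exactly Theorem~\ref{thm:yannakakis}. So all of the work lies in the backward direction: if $q$ is cyclic then, under the two hypotheses, it cannot be evaluated in time $\softO(m)$. I would apply Brault-Baron's structural Theorem~\ref{thm:brault-baron} to the hypergraph $H$ of $q$: since $H$ is not acyclic, it exposes an induced set $S$ that is either a cycle or becomes a $(|S|-1)$-uniform hyperclique on $|S|$ vertices after deleting edges subsumed by other edges of $H[S]$. These two structural possibilities lead to two separate reductions.

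In the cycle case, I would generalise the construction of Proposition~\ref{prop:embed-triangle} from arity two to arbitrary arity. Given a triangle instance $G = (V, E)$, build a database $D$ over the domain $V \cup \{d\}$, where $d$ is a fresh dummy element, as follows. Pick three atoms whose scope lies on the induced cycle and load the symmetrised edge relation into their cycle coordinates; load the equality relation of $V$ into the cycle coordinates of every remaining atom on the cycle; for every atom whose scope meets but is not contained in the cycle, pad every off-cycle coordinate with the dummy $d$; and for every atom disjoint from the cycle, put $(d,\dots,d)$. The equality constraints force the induced cycle to collapse onto a triangle in $G$, exactly as in the arity-two proof, so $D \models q$ iff $G$ contains a triangle, and $|D|$ is linear in $|E|$. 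An $\softO(m)$ evaluation of $q$ would contradict the \kl{Triangle Hypothesis}.

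In the hyperclique case, let $k = |S|$, so that after removing subsumed edges $H[S]$ is the $(k-1)$-uniform clique on $k$ vertices. Given a $(k-1)$-uniform hypergraph $H'=(V',E')$, I would reuse the symmetrisation trick from the Loomis-Whitney lower bound earlier in the section: form the relation $R$ over $V'$ containing every tuple obtained by permuting the elements of each edge of $E'$, assign $R$ (with the appropriate coordinate ordering) to every non-subsumed atom on $S$, assign the corresponding projection of $R$ to every atom on $S$ that is subsumed by another atom on $S$, and pad every atom touching a variable outside $S$ with the dummy value $d$ as in the cycle case. Then $D \models q$ iff $H'$ has a $k$-clique; since $|D| \le n^{k-1}$ with $n = |V'|$, an $\softO(m)$ algorithm for $q$ would yield an $\softO(n^{k-1})$ algorithm for $(k-1)$-uniform $k$-hyperclique, contradicting the \kl{Hyperclique Hypothesis}.

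The main obstacle will be bookkeeping rather than anything conceptual. In both reductions one has to verify that padding outside $S$ does not create spurious satisfying assignments, and in the hyperclique case that loading projections of $R$ into subsumed atoms does not let $D \models q$ without a genuine hyperclique in $H'$. Once these routine checks are in place, the theorem falls out almost directly from Brault-Baron's structural result: every cyclic hypergraph exposes either a triangle-like or a hyperclique-like obstruction, and these are precisely the obstructions the two hypotheses forbid from being resolved in linear time.
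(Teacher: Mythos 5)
Your proposal matches the paper's own proof exactly: the paper gives a two-sentence sketch invoking the Yannakakis algorithm (Theorem~\ref{thm:yannakakis}) for the upper bound and Brault-Baron's structural result (Theorem~\ref{thm:brault-baron}) for the lower bound, embedding triangle or hyperclique finding along the lines of Proposition~\ref{prop:embed-triangle}. You take the same route and simply fill in the padding and symmetrisation details that the paper leaves implicit; these are correct, with the one small remark that in the hyperclique branch one may assume $|S| \ge 4$ (so the uniformity $|S|-1 > 2$ meets the condition in the \kl{Hyperclique Hypothesis}), since $|S|=3$ falls under the cycle/triangle branch.
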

\begin{proof}[Proof sketch]
    For the upper bound, we use the Yannakakis algorithm of Theorem~\ref{thm:yannakakis}. For the lower bound, we use Theorem~\ref{thm:brault-baron} to embed triangle or hyperclique finding, similarly to Proposition~\ref{prop:embed-triangle}.
\end{proof}
\subsection{Counting}\label{sct:counting}

In this section, we consider the question of counting answers to conjunctive queries as a prototypical aggregation question. In particular, we will show for which queries we can expect algorithms that allow counting answers in linear time. Since any such algorithm allows in particular answering if there is any answer, we get from Theorem~\ref{thm:booleanLower} that for self-join free queries, we can restrict our attention to acyclic queries. In the case of join queries, i.e.~when all variables are output variables, this already characterizes the linear-time solvable queries, since there is a generalization of the Yannakakis algorithm, see e.g.~\cite[Lemma 19]{BraultBaron13}.

\begin{theorem}\label{thm:dalmauJ}
    Assume the \kl{Triangle Hypothesis} and the \kl{Hyperclique Hypothesis}. Then there is~a~$\softO(m)$ algorithm that counts the answers to a join query $q$ on instances of size $m$ if and only if $q$ is acyclic.
\end{theorem}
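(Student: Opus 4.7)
The plan is to prove the two directions of the biconditional separately, with the upper bound going by a counting variant of Yannakakis's algorithm and the lower bound going by reduction from the Boolean decision problem already classified in Theorem~\ref{thm:booleanLower}. Since the statement concerns join queries (all variables are output variables), neither direction should need a projection step.

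For the \emph{if}-direction, suppose $q$ is acyclic and fix a join tree $T$ whose nodes are the atoms of $q$ and whose connectedness property ensures that for every variable $x$, the set of atoms containing $x$ induces a connected subtree of $T$. I would run a bottom-up dynamic program on $T$: root $T$ arbitrarily, and for each atom $R_i(X_i)$ with tuples $t \in R_i^D$ store an integer $c_i(t)$ that counts the number of ways the partial assignment corresponding to $t$ extends to a full answer of $q$ when restricted to the subtree rooted at $R_i$. At a leaf, $c_i(t) = 1$. At an internal node $R_i$ with children $R_{j_1}, \ldots, R_{j_s}$, set
\[
  c_i(t) \;=\; \prod_{r=1}^{s} \Big( \sum_{\substack{t' \in R_{j_r}^D \\ t' \text{ agrees with } t \text{ on shared vars}}} c_{j_r}(t') \Big).
\]
The inner sums can be precomputed by bucketing tuples of each child by their projection onto the variables shared with the parent, which takes linear time per edge of $T$; then one pass per atom suffices to finalize all $c_i(t)$. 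The final count is $\sum_{t \in R_{\text{root}}^D} c_{\text{root}}(t)$. The total work is $\softO(m)$, since each tuple of each relation is touched only a constant (query-dependent) number of times. This is the generalization of Yannakakis used in~\cite[Lemma~19]{BraultBaron13}, but I would recall the explicit recurrence because the move from Boolean semantics to counting is the one conceptual step where one could plausibly introduce an overcount via the acyclic structure of $T$.

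For the \emph{only if}-direction, suppose $q$ is cyclic. Any algorithm that outputs the number of answers to $q$ in time $\softO(m)$ can be used as an algorithm that decides the Boolean projection of $q$ in time $\softO(m)$, simply by testing whether the returned count is positive. The Boolean projection has the same hypergraph as $q$ and so is also cyclic; assuming the query is self-join free (which is the setting in which Theorem~\ref{thm:booleanLower} is stated), this contradicts the Triangle Hypothesis or the Hyperclique Hypothesis via Theorem~\ref{thm:booleanLower}. Concretely, the cyclic structure of $q$ lets us, as in Proposition~\ref{prop:embed-triangle} or via Theorem~\ref{thm:brault-baron}, embed either triangle finding or hyperclique finding into the decision problem for $q$.

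The main obstacle, and the only non-routine point, is making sure the counting recurrence above is really correct and runs in linear time in the presence of shared variables that appear in several atoms: the connectedness condition of the join tree is exactly what guarantees that multiplying the children's contributions does not double-count, and one should verify that the bucketing by shared variables can be done using hashing or sorting within the $\softO(m)$ budget in the RAM model from the preliminaries. The lower-bound side is essentially immediate once one observes that counting solves decision, so the proof reduces to the (already established) classification for Boolean queries.
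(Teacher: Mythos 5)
Your upper bound is fine and matches the standard counting Yannakakis recurrence; the connectedness condition of the join tree is indeed what rules out double-counting, as you say. The lower bound, however, has a genuine gap: you restrict to self-join free queries, noting explicitly that this is ``the setting in which Theorem~\ref{thm:booleanLower} is stated'', but Theorem~\ref{thm:dalmauJ} is stated for \emph{all} join queries, self-joins included, and this is precisely where the extra work lies. The paper flags this in the remark immediately following the theorem statement.

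Moreover, your strategy of passing from counting to decision cannot, as it stands, be repaired to cover self-joins. The paper observes (in the enumeration subsection) that there exist \emph{cyclic} join queries with self-joins that admit constant-delay enumeration after linear preprocessing, and a fortiori linear-time decision. So for such a query, the Boolean projection is easy, and ``count $>0$ iff satisfiable'' gives you nothing. What the paper does instead, citing~\cite{DalmauJ04}, is an interpolation argument: given a cyclic join query $q$ with self-joins, one builds a family of databases (typically by taking disjoint copies or scaling domains) on which the count of $q$ is a polynomial in the counts of certain associated \emph{self-join free} queries; running a hypothetical linear-time counter for $q$ on this family and solving the resulting linear system recovers the count for a self-join free cyclic query, and only \emph{then} does one invoke the decision hardness of Theorem~\ref{thm:booleanLower}. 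To close the gap you would need to spell out this interpolation step rather than appeal directly to the decision lower bound.
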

Note that in Theorem~\ref{thm:dalmauJ} we do not require self-join freeness, since self-joins can be dealt with in the lower bound with an interpolation argument first used in~\cite{DalmauJ04}.

The situation becomes more interesting for queries with projections. Pichler and Skritek~\cite{PichlerS13} were the first to show that at least in \emph{combined} complexity projections make counting hard. Following this, a line of work including~\cite{DurandM14,DurandM13,ChenM15,GrecoS14,DellRW19} lead to a good understanding of tractable classes of queries, both in parameterized and combined complexity. The first results for fixed queries are found in~\cite{DellRW19}. However, those only apply to runtimes that are at most cubic, so we present a variant of the ideas from~\cite{DellRW19} here that is taken from the note~\cite{Mengel21}.

We will need some more definitions: a dominating set $S$ of a graph $G=(V,E)$ is a set $S\subseteq V$ such that every vertex not in $S$ has a neighbor in $S$. The problem $k$-Dominating Set (short $k$-DS) is to decide, given a graph $G$, if $G$ has a dominating set of size at most $k$. 

Our aim will be to show the following lower bound for the star queries that have already been a crucial building block of the lower bounds in~\cite{DurandM14,DurandM13}.

\begin{lemma}\label{lem:stars}
    Let $k\in \mathbb{N}$ with $k\ge 2$. If there is an algorithm that counts answers to \[q^\star_k(x_1, \ldots, x_k) \datarule \bigwedge_{i\in [k]} R(x_i,z)\] in time $O(m^{k-\varepsilon})$ on databases with $m$ tuples for some $\varepsilon>0$, then there is a $k'\in \mathbb{N}, k'>3$ such that $k'$-DS can be decided in time $O(n^{k'-\epsilon})$ on graphs with $n$ vertices.
\end{lemma}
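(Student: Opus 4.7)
The plan is to reduce $k$-Dominating Set on $n$-vertex graphs to counting answers of $q^\star_k$ on a database of size $m \le n^2$. Given a graph $G = (V,E)$ on $n$ vertices, I would take the active domain of $D$ to be $V$ and define the single binary relation as
\[
R^D := \{(v,u) \in V \times V : u \notin N[v]\},
\]
where $N[v]$ denotes the closed neighborhood of $v$ in $G$. This is the ``complement of the adjacency'' relation (with self-loops removed), and trivially $|R^D| \le n^2$, so $m \le n^2$.

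Next I would unpack what it means for a tuple $(a_1, \ldots, a_k) \in V^k$ to be an answer to $q^\star_k$ on $D$: by definition of the query, it is an answer iff there exists some $z \in V$ with $(a_i,z) \in R^D$ for every $i$, i.e.\ a vertex $z$ that is dominated by \emph{none} of $a_1, \ldots, a_k$. Taking complements within $V^k$, the $k$-tuples that are \emph{not} answers are precisely those whose underlying (multi)set is a dominating set of $G$. Hence
\[
\#\{(a_1,\ldots,a_k) \in V^k : \{a_1,\ldots,a_k\} \text{ dominates } G\} = n^k - |q^\star_k(D)|,
\]
and this quantity is positive iff $G$ has a dominating set of size at most $k$, since any such set can be padded to a $k$-tuple by repeating elements. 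So a single call to the hypothesized counting algorithm, followed by a subtraction from $n^k$ and a zero-test, decides $k$-DS on $G$.

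For the runtime, applying the hypothesized $O(m^{k-\varepsilon})$ counting algorithm to $D$ costs at most $O((n^2)^{k-\varepsilon}) = O(n^{2k - 2\varepsilon})$. Setting $k' := 2k$, we have $k' \ge 4 > 3$ since $k \ge 2$, and we obtain an algorithm that decides $k'$-DS in time $O(n^{k' - 2\varepsilon})$, exactly of the claimed form.

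The step I expect to be the crux is finding the right encoding of the graph into $R^D$: using the \emph{complement} of the closed neighborhood is what converts the existential ``$\exists z$'' built into $q^\star_k$ into the universal ``every vertex is dominated'' condition of $k$-DS, while letting the multiplicity bookkeeping collapse into the trivial correction term $n^k$. Once this encoding is in hand, the size bound $m \le n^2$ and the choice $k' = 2k$ are essentially forced, and I do not anticipate any further technical obstacle; the only mild subtlety worth checking is that the tuple-vs.-set distinction is harmless, which follows because dominating sets of size at most $k$ are in clear correspondence with the nonemptiness of the set of dominating $k$-tuples via padding.
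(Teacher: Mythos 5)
There is a genuine gap in your argument, and it is in the final paragraph where you declare the runtime acceptable. Your encoding $R^D := \{(v,u) : u \notin N[v]\}$ is correct (and indeed the same in spirit as the paper's), and the correctness argument — that non-answers of $q^\star_k$ are exactly the $k$-tuples whose entries dominate $G$ — is fine. But the algorithm you build decides \emph{$k$-DS} (whether $G$ has a dominating set of size at most $k$), and it does so in time $O((n^2)^{k-\varepsilon}) = O(n^{2k-2\varepsilon})$. Relabelling $k' := 2k$ does not turn this into an algorithm for $k'$-DS: you are still only testing for dominating sets of size at most $k$, not of size at most $2k$. And as an algorithm for $k$-DS, a runtime of $O(n^{2k-2\varepsilon})$ is worse than the trivial $O(n^{k+O(1)})$ brute force for all $k \ge 2$ and reasonable $\varepsilon$, so it gives no contradiction with anything. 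In short, the savings of $\varepsilon$ in the exponent of the counting algorithm is swallowed by the factor $2$ coming from $m \le n^2$, and you end up with a vacuous statement.

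The missing idea — which is the real content of the lemma — is to blow up $k'$ well beyond $k$ and pack several graph vertices into each query variable. The paper takes $k'$ a multiple of $k$ with $k' > k^2/\varepsilon$ and lets each $x_i$ range over $(k'/k)$-tuples of vertices, so a single assignment to $(x_1,\ldots,x_k)$ selects up to $k'$ vertices of $G$. The relation $R$ then contains pairs $(\vec u, v)$ with $\vec u \in V^{k'/k}$ and $v$ not dominated by any entry of $\vec u$; this has size $m \le n^{k'/k + 1}$. Running the hypothetical counting algorithm now costs $O\bigl(n^{(k'/k+1)(k-\varepsilon)}\bigr)$, and the choice $k' > k^2/\varepsilon$ makes the exponent strictly less than $k' - \varepsilon$, giving a genuine speedup for $k'$-DS. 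Without this amplification step — i.e.\ without decoupling $k'$ from $k$ — the reduction cannot produce a subcritical exponent, and that is exactly where your proposal breaks down.
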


The existence of an algorithm for $k$-DS in Lemma~\ref{lem:stars} is very unlikely because it would contradict the \kl{Strong Exponential Time Hypothesis}, a very impactful hypothesis from fine-grained complexity~\cite{ImpagliazzoP01,ImpagliazzoPZ01}. 

\AP \begin{hypothesis}[\intro{Strong Exponential Time Hypothesis}]
    For every $\varepsilon > 0$, there is a $k\in \mathbb{N}$ such that $k$-SAT cannot be solved on instances with $n$ variables in time $\softO(2^{n(1-\varepsilon)})$.
\end{hypothesis}

SAT is certainly the most well-studied problem in complexity theory, and despite this effort and the existence of faster algorithms for restricted variants like $k$-SAT, no algorithm with a runtime $O(2^{n(1-\varepsilon)})$ is known. This makes the \kl{Strong Exponential Time Hypothesis} rather plausible.
The following connection is from~\cite{PatrascuW10}.

\begin{theorem}\label{thm:PatrascuW10}
    Assuming the \kl{Strong Exponential Time Hypothesis}, there is no constant $\epsilon'$ such that there is a constant $k$ and an algorithm for $k$-DS with runtime $O(n^{k-\varepsilon'}) $ on graphs with $n$ vertices.
\end{theorem}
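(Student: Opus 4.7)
The plan is to prove the theorem by contraposition: assuming the existence of some constant $\varepsilon' > 0$ and some constant $k \ge 3$ for which $k$-DS admits an algorithm of runtime $O(n^{k-\varepsilon'})$, I would design a reduction from arbitrary $k_0$-SAT into $k$-DS and show that the resulting SAT algorithm has runtime $\tilde O(2^{N(1-\delta)})$ for a single $\delta > 0$ that is independent of $k_0$. Because the bound would hold for every $k_0$, this directly contradicts the Strong Exponential Time Hypothesis, which demands that for \emph{every} such $\delta$ some $k_0$-SAT resists this runtime.

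The core of the argument is the reduction, which follows the ``split-and-list'' scheme. Given a $k_0$-SAT instance $\varphi$ with $N$ variables and $m$ clauses, I would partition the variable set into $k$ blocks $B_1, \ldots, B_k$ of size $\lceil N/k \rceil$ each. For every block $B_i$, the graph $G$ contains $2^{|B_i|}$ \emph{assignment vertices}, one for every partial truth assignment $\alpha$ to $B_i$; these vertices form an independent set $V_i$. To ensure that any dominating set of size $k$ must pick exactly one vertex in each $V_i$, I would attach a standard selection gadget per block, for instance a pair of extra vertices $a_i, b_i$ connected to each other and to all of $V_i$ and nowhere else, plus making $V_i$ into a clique; a size-$k$ dominating set is then forced (by a simple pigeonhole argument) to contain a unique $v_{i,\alpha_i} \in V_i$ for each $i$. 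Finally, for each clause $C_j$ of $\varphi$ I would create a \emph{clause vertex} $c_j$ and connect it to every assignment vertex $v_{i,\alpha}$ for which $\alpha$ already satisfies $C_j$. Then $c_j$ is dominated by the chosen tuple $(\alpha_1, \ldots, \alpha_k)$ iff the combined assignment satisfies $C_j$, so $G$ has a dominating set of size $k$ iff $\varphi \in \mathrm{SAT}$.

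For the runtime, note that $G$ has $n = O(k \cdot 2^{N/k} + m)$ vertices and can be built in time $\mathrm{poly}(n)$. Applying the hypothetical $k$-DS algorithm yields total time
\begin{equation*}
O\bigl(n^{k-\varepsilon'}\bigr) \;=\; \tilde O\bigl(2^{(N/k)(k-\varepsilon')}\bigr) \;=\; \tilde O\bigl(2^{N(1-\varepsilon'/k)}\bigr),
\end{equation*}
where $k$ and $\varepsilon'$ are fixed constants, independent of $k_0$. Setting $\delta := \varepsilon'/k > 0$ gives a $\tilde O(2^{N(1-\delta)})$ algorithm that works for $k_0$-SAT for \emph{every} $k_0$, contradicting the Strong Exponential Time Hypothesis as stated.

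The main obstacle is designing the selection gadget so that \emph{every} dominating set of size $k$ (not just the intended ones) corresponds to a well-defined tuple of partial assignments, while the gadget itself adds only $\mathrm{poly}(n)$ vertices and preserves $n = \Theta(2^{N/k})$. A careless gadget could allow a DS to spend multiple vertices inside one block and none in another, or could inflate $n$ enough to spoil the exponent. The cleanest workaround I would use is the Pătraşcu--Williams trick above: making each $V_i$ a clique and attaching the pair $\{a_i, b_i\}$ so that dominating $a_i, b_i$ forces consumption of one unit of the size-$k$ budget per block. Once this gadget works, the rest of the calculation is bookkeeping and does not interact with $k_0$, which is exactly what is needed to violate SETH uniformly.
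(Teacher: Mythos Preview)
The paper does not give its own proof of this theorem; it simply imports the result from P\u{a}tra\c{s}cu and Williams~\cite{PatrascuW10} and uses it as a black box. Your proposal reconstructs precisely the split-and-list reduction from that reference: partition the $N$ SAT variables into $k$ equal blocks, list the $2^{N/k}$ partial assignments per block as vertices, force one pick per block with a small selection gadget, and attach clause vertices so that domination of all clause vertices is equivalent to satisfaction of $\varphi$. The correctness argument and the arithmetic $n^{k-\varepsilon'}=\tilde O(2^{N(1-\varepsilon'/k)})$ are exactly as in the original source, so there is nothing to compare---your sketch \emph{is} the cited proof.

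Two small remarks on the write-up. First, your gadget (the pair $a_i,b_i$ adjacent to each other and to all of $V_i$) does not literally force the dominating set to pick inside $V_i$; it only forces one pick inside $\{a_i,b_i\}\cup V_i$. This is still fine for correctness, because a pick of $a_i$ or $b_i$ simply corresponds to an arbitrary extension on block $i$ and the clause vertices are still dominated via the other blocks---but you should say this explicitly rather than claim the gadget forces $S\cap V_i\neq\emptyset$. Second, turning each $V_i$ into a clique costs $\Theta(n^2)$ construction time, which for very large $\varepsilon'$ could exceed $n^{k-\varepsilon'}$; the conclusion survives because $n^2=2^{2N/k}$ is already $2^{(1-\delta)N}$ for $k\ge 3$, but your displayed runtime equation silently drops this term.
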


We now return to our star queries.
\begin{proof}[Proof of Lemma~\ref{lem:stars}]
    Choose $k'$ as a fixed integer such that $k'>k^2/\epsilon$ and $k'$ is divisible by $k$. We will encode $k'$-DS into the query $q^\star_k$. To this end, let $G=(V, E)$ be an input for $k'$-DS. Set
    \[R:=\{(\vec{u}, v)\mid v\in V, \vec{u} = (u_1,\ldots, u_{k'/k})\in V^{k'/k}, \forall i \in [k'/k]: u_iv\notin E, u_i \ne v\}.\]
    Then any assignment $\vec{u}^1, \ldots, \vec{u}^k$ to $x_1, \ldots, x_k$ corresponds to a choice $S$ of at most $k'$ vertices in $G$. The set $S$ is a dominating set if and only if there is no vertex $v$ in $V$ that is not in~$S$ and has no neighbor in $S$. This is the case if and only if there is no $v\in V$ that assigned to $z$ makes $q^\star_k$ true. Thus the answers to $q^\star_k(x_1, \ldots, x_k)$ are exactly the assignments $\vec{u}^1, \ldots, \vec{u}^k$ that do \emph{not} correspond to dominating sets of size at most $k'$ in $G$. Since there are exactly $n^{k'}$ choices for the $\vec u_i$, any algorithm counting the answers to $q^\star_k(x_1, \ldots, x_k)$ directly yields an algorithm for $k'$-DS.
    
    We now analyze the runtime of the above algorithm. Note that the time for the construction of $R$ is negligible, so the runtime is essentially that of the counting algorithm for $q^\star_k(x_1, \ldots, x_k)$. First observe that the relation $R$ has at most $n^{\frac{k'}{k}+1}$ tuples. The exponent of the runtime of the counting algorithm is thus
    \begin{align*}
        \left(\frac{k'}{k}+1\right)(k-\epsilon) &= k'+k-\frac{k' \epsilon}{k} - \epsilon
         < k'+k - \frac{k^2 \epsilon}{\epsilon k} -\epsilon
         = k'-\epsilon
    \end{align*}
    where the inequality comes from the choice of $k'$ satisfying $k'> k^2/\epsilon$.
\end{proof}

\begin{corollary}\label{cor:star}
    If SAT has no algorithm with runtime $O(2^{n(1-\epsilon)})$ for any $\epsilon>0$, then there is no constant $k\in \mathbb{N}$, $k\ge 2$ and no $\epsilon' > 0$ such that there is an algorithm that counts answers to \[q^\star_k(x_1,\ldots, x_k) \datarule \bigwedge_{i\in [k]}R(x_i,z)\] in time $O(m^{k-\epsilon'})$ on databases with $m$ tuples. 
\end{corollary}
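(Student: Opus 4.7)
The proof is essentially a short chain of two contrapositions through the two results just established, so most of the work is in bookkeeping of constants. I would start by assuming for contradiction that there exist a constant $k \ge 2$, a constant $\epsilon' > 0$, and an algorithm that counts answers to $q^\star_k$ in time $O(m^{k-\epsilon'})$ on databases of size $m$.

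Applied to these parameters, Lemma~\ref{lem:stars} immediately produces a constant $k' \in \mathbb{N}$ with $k' > 3$ and an algorithm that decides $k'$-DS in time $O(n^{k'-\epsilon'})$ on $n$-vertex graphs. Reading Theorem~\ref{thm:PatrascuW10} in contrapositive form, the existence of such a $k'$-DS algorithm refutes the \kl{Strong Exponential Time Hypothesis}: there is some $\bar\epsilon > 0$ such that for every constant $k''$, $k''$-SAT admits an algorithm of runtime $\softO(2^{n(1-\bar\epsilon)})$.

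It then remains to deduce from this that SAT itself admits an $O(2^{n(1-\epsilon)})$ algorithm for some $\epsilon > 0$, which would contradict the hypothesis of the corollary. The cleanest way to bridge this last gap is via the sparsification lemma of Impagliazzo, Paturi, and Zane, which reduces an arbitrary SAT instance on $n$ variables to a disjunction of a subexponential number of $k''$-SAT instances on the same variable set. Composing this reduction with the $k''$-SAT algorithm from the previous step and choosing the blow-up parameter small enough relative to $\bar\epsilon$ yields the required SAT algorithm. This passage between general SAT and bounded-width SAT is the only step that is not purely mechanical, and it is the one I would flag as the substantive obstacle; everything else is constant arithmetic and direct substitution into the two earlier statements.
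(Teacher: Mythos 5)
Your chain through Lemma~\ref{lem:stars} and the contrapositive of Theorem~\ref{thm:PatrascuW10} is fine up to the point where you obtain, from the hypothetical counting algorithm, that SETH fails: there is some $\bar\epsilon > 0$ such that $k''$-SAT is solvable in time $\softO(2^{n(1-\bar\epsilon)})$ for every fixed $k''$. The problem is the very last step, the one you yourself flag as the substantive obstacle, and the proposed bridge does not hold. The sparsification lemma of Impagliazzo, Paturi, and Zane takes a \emph{$k$-SAT} instance and rewrites it as a disjunction of $2^{\delta n}$ \emph{sparse $k$-SAT} instances on the same variables; it does \emph{not} reduce an arbitrary-width CNF formula to a disjunction of bounded-width formulas. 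No such reduction is known, and indeed the implication ``SETH fails $\Rightarrow$ CNF-SAT is solvable in time $O(2^{n(1-\epsilon)})$ for some $\epsilon > 0$'' is precisely the open direction in the relationship between SETH and the CNF-SAT hypothesis (the easy direction is the converse: a fast algorithm for unrestricted SAT trivially handles every $k$-SAT). As written, your argument only proves the corollary under SETH, whereas the corollary is stated under the weaker assumption about unrestricted SAT.

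The intended route avoids this issue entirely: the Patrascu--Williams reduction in~\cite{PatrascuW10} goes directly from CNF-SAT with arbitrary clause width to $k$-Dominating Set (split the $n$ variables into $k$ groups, take one vertex per partial assignment of each group plus one vertex per clause, and observe that $k$ chosen partial-assignment vertices dominate all clause vertices iff the combined assignment satisfies the formula). This gives: a $k$-DS algorithm in time $O(n^{k-\varepsilon'})$ for some $k\ge 3$ yields a SAT algorithm in time $O(2^{n(1-\delta)}\cdot\mathrm{poly})$ for some $\delta>0$, with no detour through $k$-SAT. Combined with Lemma~\ref{lem:stars} this proves the corollary as stated. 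The paper's Theorem~\ref{thm:PatrascuW10} is phrased in terms of SETH for readability, but to derive Corollary~\ref{cor:star} under the stated hypothesis one has to invoke the stronger fact that the underlying reduction in~\cite{PatrascuW10} starts from unbounded-width CNF-SAT.
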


The queries $q^\star_k$ are important because they can be embedded into many other queries. They thus play a crucial role in understanding the parameterized complexity of counting for conjunctive queries, see e.g.~\cite{BraultBaron13,DurandM14,DellRW19}. One central notion that we need in our fine-grained setting is that of free-connex acyclic queries, a concept from~\cite{BaganDG07}. An acyclic conjunctive query with hypergraph $\mathcal{H}$ and free variables $S$ is called \emph{free-connex} if the hypergraph $\mathcal{H}\cup \{S\}$ that we get from $\mathcal{H}$ by adding $S$ as an edge is acyclic as well. Non Free-connex queries are hard in the following sense.

\begin{theorem}\label{thm:embedding}
    Let $q$ be a self-join free conjunctive query that is acyclic but not free-connex. Then, assuming 
    the \kl{Strong Exponential Time Hypothesis}
    there is no algorithm that counts the solutions of $q$ on a database with $m$ tuples in time $O(m^{2-\epsilon'})$ for any $\epsilon'$. 
\end{theorem}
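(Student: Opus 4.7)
The plan is to reduce from a self-join free variant of the $2$-star counting problem, namely $\tilde q^\star_2(x_1, x_2) \datarule R_1(x_1, z), R_2(x_2, z)$. I first observe that the reduction in the proof of Lemma~\ref{lem:stars} for $k = 2$ only ever uses a single relation~$R$, so we may relabel it as $R_1$ in the first atom and as $R_2$ in the second atom without changing the set of answers. Hence the same lower bound applies to $\tilde q^\star_2$: under the \kl{Strong Exponential Time Hypothesis}, there is no $\varepsilon' > 0$ and no algorithm counting answers to $\tilde q^\star_2$ on instances of size $m$ in time $O(m^{2-\varepsilon'})$.

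The key structural input is the characterization of non-free-connex acyclic CQs from~\cite{BaganDG07}: if $q$ is acyclic but not free-connex with free variables $F$, then the hypergraph $\mathcal{H}\cup \{F\}$ is cyclic, and from this one can extract two free variables $x_1, x_2$ that do not co-occur in any atom of $q$ but are connected via a path of atoms passing through at least one existentially quantified variable. I would first focus on the cleanest case, where the witness reduces to a single non-free variable $z$ together with two distinct atoms $\alpha_1, \alpha_2$ such that $\{x_1, z\}\subseteq \alpha_1$ and $\{x_2, z\}\subseteq \alpha_2$.

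Given this structure, the reduction from $\tilde q^\star_2$ is the expected linear-size embedding. From an input $(R_1^D, R_2^D)$ of total size $m$, build a database $D'$ for $q$ as follows: populate $\alpha_1^{D'}$ by encoding each $(a,c)\in R_1^D$ as a tuple placing $a$ at the $x_1$-position, $c$ at the $z$-position, and a fresh dummy value $d$ everywhere else, and symmetrically for $\alpha_2^{D'}$ using $R_2^D$; for every other atom $\beta$, take $\beta^{D'}$ to consist only of the all-$d$ tuple. Since $q$ is self-join free, these relations can be chosen independently. The database $D'$ then has size $O(m)$, and by construction the answers to $q$ on $D'$ are in bijection with the answers to $\tilde q^\star_2$ on $(R_1^D, R_2^D)$: the variables $x_1, x_2$ range over the encoded domain, while every other free variable is forced to $d$ and every non-free variable other than $z$ is likewise forced to $d$.

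The main obstacle I expect is handling the general witness, where the path between $x_1$ and $x_2$ passes through several non-free variables $z_1, \ldots, z_t$ and the query has additional atoms branching off this path. One then has to argue that the chain can be effectively collapsed by propagating the $z$-value along it using a linear-size cascade of atom fillings, while the branching atoms are still filled by the all-$d$ tuple without creating spurious answers; self-join freeness is what allows us to assign contents to each atom of $q$ independently. Once this general embedding is pinned down, any hypothetical counting algorithm for $q$ running in time $O(m^{2-\varepsilon'})$, composed with our linear-size reduction, would yield a counting algorithm for $\tilde q^\star_2$ of the same runtime, contradicting Lemma~\ref{lem:stars} and hence SETH via Theorem~\ref{thm:PatrascuW10}.
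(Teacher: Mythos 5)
Your proposal follows the same route the paper indicates: reduce from the $2$-star counting lower bound (Corollary~\ref{cor:star}, derived from Lemma~\ref{lem:stars}) and embed the star query into $q$ by exploiting the free-path structure of acyclic non-free-connex queries; your detour through the self-join-free variant $\tilde q^\star_2$ is harmless since Lemma~\ref{lem:stars}'s reduction simply sets $R_1^D = R_2^D$. The part you leave at a sketch level --- collapsing a free path $x_1, z_1, \ldots, z_t, x_2$ of length $t > 1$ by filling the chain atoms with equalities on the center domain and the remaining atoms with the all-dummy tuple --- is precisely the part the paper delegates to~\cite{BaganDG07} and~\cite{Mengel21}, so the two arguments coincide.
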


The argument is a minimal adaption of one found in the full version of~\cite{BaganDG07} for enumeration\footnote{Unfortunately, this full version has never been published in a journal, but it can be found at \url{https://webusers.imj-prg.fr/~arnaud.durand/papers/BDGlongversion.pdf}. The techniques can also be found in~\cite[Chapter 2.7]{Bagan09}.}. The idea is to embed the query~$q^\star_2$ and then use Corollary~\ref{cor:star}. For the details, we refer to~\cite{Mengel21}.

From Theorem~\ref{thm:embedding}, we get the following dichotomy theorem for linear time counting.
\begin{theorem}\label{thm:dichoCounting}
    Assume the \kl{Strong Exponential Time Hypothesis}, the \kl{Triangle Hypothesis} and the \kl{Hyperclique Hypothesis}. Then there is~a~$\softO(m)$ algorithm that counts the answers to a self-join free conjunctive  query $q$ on instances of size $m$ if and only if $q$ is free-connex acyclic.
\end{theorem}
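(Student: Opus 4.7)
The plan is to split the statement into the upper and lower bound and reduce each direction to results already established in the excerpt, so that almost nothing new has to be proved here.

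For the upper bound, suppose $q$ is free-connex acyclic with free variables $S$. Recall that free-connex acyclicity means that the hypergraph $\mathcal{H}\cup\{S\}$ is acyclic, so we may pick a join tree of $\mathcal{H}\cup\{S\}$ in which the atoms whose variables are contained in $S$ form a connected subtree containing the special edge $S$. First I would run a Yannakakis-style semi-join reduction on the full hypergraph to remove all dangling tuples, which takes time $\softO(m)$. After this reduction, projecting each atom to the free variables and then counting answers is equivalent to counting answers of a purely join query on the acyclic subhypergraph induced by $S$; this is exactly the situation handled by the counting generalisation of Yannakakis invoked in Theorem~\ref{thm:dalmauJ} (and attributed to~\cite[Lemma 19]{BraultBaron13}). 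Composing these two linear-time steps yields the desired $\softO(m)$ counting algorithm.

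For the lower bound, suppose $q$ is self-join free and not free-connex acyclic. I would split into two cases according to why $q$ fails the condition. If $q$ is not acyclic at all, then by Theorem~\ref{thm:booleanLower} (under the \kl{Triangle Hypothesis} and the \kl{Hyperclique Hypothesis}) there is no $\softO(m)$ algorithm that even decides the Boolean version of $q$. Any counting algorithm decides the Boolean version by checking whether the returned count is non-zero, so no $\softO(m)$ counting algorithm exists either. If instead $q$ is acyclic but not free-connex, then Theorem~\ref{thm:embedding} already rules out, under the \kl{Strong Exponential Time Hypothesis}, any algorithm of runtime $O(m^{2-\varepsilon'})$ for counting the answers to $q$; in particular no $\softO(m)$ algorithm exists.

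The two cases together cover every self-join free conjunctive query that is not free-connex acyclic, giving the lower bound and completing the dichotomy. The main conceptual point, and the only step that is not an immediate citation, is the upper bound: one has to be a little careful in justifying that for a free-connex acyclic query the semi-join reduction followed by a join-query counting step is correct, i.e.\ that after reduction each projected tuple is indeed extensible to exactly one full answer. This is standard, but it is the place where the free-connex hypothesis (and not just acyclicity) is genuinely used. Everything else is assembled from Theorems~\ref{thm:booleanLower}, \ref{thm:dalmauJ} and \ref{thm:embedding}.
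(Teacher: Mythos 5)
Your proof follows the same two-step structure as the paper's sketch: the upper bound reduces to the counting variant of Yannakakis (Theorem~\ref{thm:dalmauJ}) after eliminating the projected variables, and the lower bound splits into the cyclic case (Theorem~\ref{thm:booleanLower}) and the acyclic-but-not-free-connex case (Theorem~\ref{thm:embedding}). One small caution on the upper bound: simply ``projecting each atom to the free variables'' after a semi-join reduction does not in general produce the right join query directly --- the actual elimination of projected variables for free-connex queries uses a join tree of $\mathcal{H}\cup\{S\}$ and is a bit more delicate (the paper defers to~\cite[Section~4.1]{BerkholzGS20} for this); you correctly flag this as the one non-citation step, so the plan is sound even if this detail would need to be filled in.
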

\begin{proof}[Proof (sketch)]
The upper bound for Theorem~\ref{thm:dichoCounting} comes from the fact that for free-connex acyclic queries, one can eliminate all projected variables efficiently (see the proof and the discussion in~\cite[Section 4.1]{BerkholzGS20}), and then use the counting variant of the Yannakakis algorithm as in Theorem~\ref{thm:dalmauJ}. 

For the lower bound, if $q$ is cyclic, then by Theorem~\ref{thm:booleanLower} we get a lower bound using the fact that a counting algorithm also lets us decide the Boolean query we get from $q$ by projecting out all variables. If $q$ is cyclic by not free-connex, the lower bound follows from Theorem~\ref{thm:embedding}.
\end{proof}

\subsection{Enumeration}

We next turn to constant delay enumeration, which is the following problem: given a query and a database, first, in the \emph{preprocessing phase}, compute a data structure that is then used in the \emph{enumeration phase} to print out the answers to the query one after the other without repetition. There are two different time bounds in enumeration algorithms: first the bound on the preprocessing phase and then the \emph{delay} which is the maximal allowed time between printing out two answers. The most studied version in database theory is \emph{constant delay enumeration} where we want the delay between answers to be independent of the database and depend only on the query, which we consider fixed; this model has first been introduced in the very influential paper~\cite{BaganDG07} and since then studied extensively throughout database theory. Somewhat surprisingly, there are many query evaluation problems which allow constant delay enumeration, see e.g.~the survey~\cite{BerkholzGS20}.

Enumeration for conjunctive queries was already studied in~\cite{BaganDG07}, the first paper to study constant delay enumeration, which might also have been the first in database theory to use fine-grained complexity even before that field was well-established and without making an explicit connection. For the case of join queries, similarly to counting, from a lower bound perspective nothing too interesting happens: acyclic join queries allow constant delay enumeration after linear preprocessing~\cite{Bagan09}\footnote{The proof in~\cite{Bagan09} uses a translation of relational conjunctive queries to a fragment of functional first-order logic and then works in that setting which makes the argument somewhat difficult to follow for the reader not used to that setting. There are by now algorithms working directly on the relational database~\cite[Lemma~19]{BraultBaron13} or using factorized representations~\cite{OlteanuZ15}. For an accessible presentation see the tutorial~\cite{BerkholzGS20}.},~and since an enumeration algorithm can in particular be used to decide if there are any answers, we get from Theorem~\ref{thm:booleanLower} that acyclicity is the only property that yields constant delay after linear preprocessing (assuming the \kl{Triangle Hypothesis} and the \kl{Hyperclique Hypothesis}, of course).

\begin{theorem}\label{thm:enumlower}
    Assume the \kl{Triangle Hypothesis} and the \kl{Hyperclique Hypothesis}. Then there is~an enumeration algorithm for a self-join free join query $q$ with preprocessing time~$\softO(m)$ and constant delay on instances of size $m$ if and only if $q$ is acyclic.
\end{theorem}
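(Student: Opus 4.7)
The plan is to split the dichotomy into its two directions and to lean heavily on results that are already available. For the positive direction, when $q$ is acyclic, I would simply invoke the existing algorithm of Bagan~\cite{Bagan09} (or equivalently~\cite[Lemma~19]{BraultBaron13}, or the factorised approach of~\cite{OlteanuZ15}) that produces, in linear preprocessing time, a data structure enabling constant-delay enumeration of the answers to any acyclic join query. Nothing new has to be proved here; the content is purely a pointer.

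For the negative direction, I would argue by contradiction. Suppose $q$ is a self-join free join query that is \emph{not} acyclic and that there is an enumeration algorithm $\mathcal{A}$ for $q$ with preprocessing time $\softO(m)$ and constant delay $c$ depending only on $q$. Let $q'$ be the Boolean query obtained from $q$ by turning every output variable into an existentially quantified variable; the atoms are unchanged, so $q'$ is again self-join free and its hypergraph is the same as that of $q$, hence still cyclic. Given a database $D$, I would run the preprocessing of $\mathcal{A}$ and then simulate the enumeration phase for at most $c+1$ steps. The query $q'$ is true on $D$ if and only if $\mathcal{A}$ outputs at least one tuple in that time window. This yields an $\softO(m)$ decision algorithm for the cyclic self-join free Boolean query $q'$, contradicting Theorem~\ref{thm:booleanLower} under the \kl{Triangle Hypothesis} and the \kl{Hyperclique Hypothesis}.

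The two directions together give the desired equivalence, and I would state the argument as a short sketch since both ingredients (the upper bound via Bagan and the lower bound via projection to a Boolean query) are essentially black-box applications. The only step that requires a small but genuine observation is that the projection from $q$ to $q'$ preserves both self-join freeness and cyclicity of the hypergraph, so that Theorem~\ref{thm:booleanLower} is applicable verbatim; I would make this explicit in one sentence.

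I do not expect any serious obstacle in this proof: the main subtlety is simply that the constant delay is what allows the enumeration algorithm to be turned into a Boolean decision procedure within the same linear time budget, and this is standard. The real content has already been carried by Theorem~\ref{thm:yannakakis}, Bagan's enumeration result, and Theorem~\ref{thm:booleanLower}; the statement here just glues them together.
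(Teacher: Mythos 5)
Your proof is correct and follows essentially the same route the paper takes: the upper bound cites Bagan's constant-delay enumeration for acyclic join queries, and the lower bound observes that running the preprocessing plus one delay window of the enumerator yields a linear-time decision procedure for the underlying cyclic self-join free Boolean query, contradicting Theorem~\ref{thm:booleanLower}. Your extra explicit remark that projecting out all variables preserves self-join freeness and the hypergraph (hence cyclicity) is a sound and welcome clarification, but the argument is otherwise identical to the one sketched in the paper.
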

We remark that, for the lower bound, since we are only interested in the first enumerated solution, we could instead go up to a linear delay bound and get the same statement.

The enumeration complexity of join queries with self-joins appears to be far harder to understand than that of self-join free queries. It was first observed in~\cite{BerkholzGS20} that the are cyclic join queries with self-joins whose answers can be enumerated with constant delay after linear preprocessing. This line of work was extended in~\cite{CarmeliS23}, giving many more examples where this happens and some more systematic understanding. Overall, the situation seems to be very subtle in the sense that apparently minor variations in the queries can have a big impact on their complexity. The complexity in that setting is thus not very well understood, and even for some concrete example queries, the enumeration complexity is unknown. We will thus restrict to self-join free queries in the remainder of this section.

Self-join free queries get more interesting in the case with projections. We will consider a self-join free variant of the star queries we saw already in Section~\ref{sct:counting}.
\begin{align*}
    \bar q^\star_k(x_1, \ldots, x_k) \datarule \bigwedge_{i\in [k]} R_i(x_i,z)
\end{align*}

\begin{theorem}\cite{BaganDG07}\label{thm:bagan}
    Assuming the \kl{Sparse Boolean Matrix Multiplication Hypothesis}, there is no enumeration algorithm for $\bar q^\star_2$ on instances of size $m$ with preprocessing $\softO(m)$ and delay $\softO(1)$.
\end{theorem}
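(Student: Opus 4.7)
The plan is to reduce sparse Boolean matrix multiplication directly to enumerating the answers of $\bar q^\star_2$, exploiting the observation that this query literally computes the support of a Boolean matrix product. Specifically, if one views $R_1$ as a matrix $A$ with $A_{i,k} = 1$ iff $(i,k) \in R_1$, and $R_2$ as a matrix $B$ with $B_{k,j} = 1$ iff $(j,k) \in R_2$, then $(x_1,x_2)$ is an answer of $\bar q^\star_2(x_1,x_2) \datarule R_1(x_1,z), R_2(x_2,z)$ precisely when there exists some $k$ with $A_{x_1,k} = B_{k,x_2} = 1$, i.e.\ exactly when $(AB)_{x_1,x_2} = 1$.

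Given an instance of sparse Boolean matrix multiplication, I would construct $R_1$ and $R_2$ as above in time linear in the input. Suppose for contradiction that $\bar q^\star_2$ admits an enumeration algorithm with preprocessing $\softO(m)$ and constant delay on databases of size $m$. Let $m_{\mathrm{in}}$ be the total number of non-zero entries of $A$ and $B$ and $m_{\mathrm{out}}$ that of the product $AB$, and set $m := m_{\mathrm{in}} + m_{\mathrm{out}}$. Then the constructed database has size $m_{\mathrm{in}} \le m$, so preprocessing runs in $\softO(m)$, and enumerating the $m_{\mathrm{out}}$ answers with constant delay contributes only $O(m_{\mathrm{out}}) \le O(m)$ additional time. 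In total, sparse Boolean matrix multiplication is solved in $\softO(m)$, contradicting the \kl{Sparse Boolean Matrix Multiplication Hypothesis}.

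The reduction itself is essentially immediate, being a restatement of the identity between the two-atom star query and two-hop composition of bipartite relations. The main step that requires care is the accounting of the cost parameter $m$: the hypothesis charges non-zero entries of input \emph{and} output of the matrix product, whereas the database alone only encodes the input. The constant delay guarantee is precisely what bridges this gap, since it ensures $O(1)$ cost per output tuple independently of database size, so the total runtime is truly linear in input plus output as the hypothesis requires. Beyond this accounting, I do not foresee any substantive obstacle.
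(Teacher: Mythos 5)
Your proposal is correct and follows essentially the same route as the paper's proof: encode $A$ as $R_1$ and $B$ (transposed) as $R_2$, observe that $\bar q^\star_2(D)$ is exactly the support of $AB$, and charge preprocessing to the input size and enumeration to the output size to obtain a total time linear in input plus output. Your explicit bookkeeping with $m_{\mathrm{in}}$ and $m_{\mathrm{out}}$ makes the cost accounting slightly more transparent than the paper's, but it is the same argument.
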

In~\cite{BaganDG07} the complexity assumption is that (non-sparse) Boolean matrix multiplication cannot be done in quadratic time, but the proof actually yields the stronger statement we give here. 

\begin{proof}[Proof of Theorem~\ref{thm:bagan}]
    We will show how to use $\bar q^\star_2$ for Boolean matrix multiplication. So let~$A$ and $B$ be matrices given as lists of indices $(i,j)$ of their at most $m$ non-zero entries. We define a database $D$ for $\bar q^\star_2$ by setting $R_1^D := A$ and $R_2^D := \{(j,i)\mid (i,j)\in B\}$ (so the transpose of the matrix). Then $\bar q^\star_2(D)$ contains exactly the non-zero entries of the Boolean matrix product $AB$.
    
    Let $m'$ be the maximal number of non-zero entries in $A$,$B$ and $AB$. Assume by way of contradiction that there is an enumeration algorithm with preprocessing $O(m)$ and delay $\softO(1)$. Then computing all non-zero entries in $AB$ using this algorithm as above takes time $\softO(m)$ for the preprocessing and $\softO(m')$ to enumerate the up to $m'$ non-zero entries of $AB$. So the overall runtime is $\softO(m')$ which contradicts the \kl{Sparse Boolean Matrix Multiplication Hypothesis}.
\end{proof}

Now with the exact same embedding as mentioned for Theorem~\ref{thm:embedding}, we get the following.

\begin{theorem}\label{thm:embeddingEnum}
    Let $q$ be a self-join free conjunctive query that is acyclic but not free-connex. Then, assuming the \kl{Sparse Boolean Matrix Multiplication Hypothesis}, there is no enumeration algorithm for~$q$ on a database with $m$ tuples with preprocessing time $\softO(m)$ and delay~$\softO(1)$. 
\end{theorem}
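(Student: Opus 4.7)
The plan is to reuse the reduction from $\bar q^\star_2$ to $q$ that underlies the proof of Theorem~\ref{thm:embedding}, but to extract an enumeration lower bound from it via Theorem~\ref{thm:bagan} rather than a counting lower bound from Corollary~\ref{cor:star}. The reduction only needs to be a linear-time many-one reduction whose output database preserves the solution set up to a trivial bijection, and this is exactly what the BDG-style embedding provides.

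First, using that $q$ is acyclic but not free-connex, I would identify the standard structural witness of non-free-connexity: a pair of free variables $x_1, x_2$ and a non-free variable $z$ such that $x_1$ and $z$ co-occur in some atom $R_1$ of $q$, $x_2$ and $z$ co-occur in some atom $R_2$, no atom contains both $x_1$ and $x_2$, and $z$ cannot be eliminated when one tries to construct a join tree of $q$ whose free variables form a connected subtree. Second, given an instance $(A, B)$ of $\bar q^\star_2$ of total size $m$, I would construct a database $D$ for $q$ by placing $A$ into $R_1$ on the positions of $(x_1, z)$ and $B$ into $R_2$ on the positions of $(x_2, z)$, padding the remaining positions of these atoms with a fresh dummy constant, and filling every other atom of $q$ with a single tuple over dummy values (much as in Proposition~\ref{prop:embed-triangle}). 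A routine check then shows that $|D|$ is linear in $m$ and that the answers of $q$ on $D$ are in bijection with those of $\bar q^\star_2$ on $(A,B)$, with the extra free variables pinned to dummy constants.

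Finally, I would close the argument by contradiction: if $q$ admitted an enumeration algorithm with preprocessing $\softO(m)$ and delay $\softO(1)$, then applying it to $D$ and projecting each emitted tuple to the $(x_1, x_2)$-coordinates (a constant-time operation) would yield an enumeration algorithm for $\bar q^\star_2$ with preprocessing $\softO(m)$ and delay $\softO(1)$, contradicting Theorem~\ref{thm:bagan} and hence the \kl{Sparse Boolean Matrix Multiplication Hypothesis}.

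The main obstacle is arguing that the embedding is truly bijective on answers rather than merely injective or surjective. One must check that the dummy tuples assigned to atoms outside the witness do not accidentally spawn additional answers by permitting multiple assignments to the non-witness free variables, and that no spurious ``collapse'' between an answer to $\bar q^\star_2$ and a dummy-dominated assignment occurs. This is precisely the delicate part of the BDG construction and is what makes the assumption of self-join freeness essential, as it guarantees that the dummy-value gadget for each atom can be chosen independently of the others. Once the bijection is established, the transfer of the enumeration bound is immediate.
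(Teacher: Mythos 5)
Your top-level plan — embed $\bar q^\star_2$ into $q$ and transfer Theorem~\ref{thm:bagan} — is exactly what the paper does, so the route is the right one. However, the structural witness you describe is not the correct characterization of non-free-connexity, and as written your reduction would fail on some queries. You claim there is always a single non-free variable $z$ with $x_1, z$ in some atom, $x_2, z$ in some other atom, and no atom containing both $x_1, x_2$. This is false: consider $q(x_1,x_2) \datarule R_1(x_1,z_1), R_2(z_1,z_2), R_3(z_2,x_2)$, which is acyclic but not free-connex, yet no single non-free variable co-occurs with both $x_1$ and $x_2$. The correct witness (from the full version of~\cite{BaganDG07}) is a \emph{free-path} $x_1, z_1, \ldots, z_k, x_2$ with $k\ge 1$ non-free intermediate variables, consecutive elements co-occurring in atoms, and no atom containing both $x_1$ and $x_2$. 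The reduction then has to route the ``$z$''-value of $\bar q^\star_2$ through the entire path, i.e.\ fill the atoms covering the intermediate pairs $(z_i,z_{i+1})$ with (a copy of) the identity relation over the relevant domain, not with a single dummy tuple.

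A related issue affects your second step even in the $k=1$ case: you write that every other atom of $q$ is filled with a single dummy tuple, but any additional atom containing $x_1$, $x_2$, or a path variable must instead be padded so that those variables can still range over the active-domain values coming from $A$ and $B$ (a Cartesian product with dummies on the remaining positions). With these two corrections the bijection-on-answers argument and the transfer of the preprocessing/delay bounds go through as you describe.
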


Since free-connex acyclic queries have algorithms with preprocessing $\softO(m)$ and delay $\softO(1)$~\cite{BaganDG07}, this directly yields the following characterization.

\begin{theorem}\label{thm:EnumLower}
    Assume the \kl{Triangle Hypothesis}, the \kl{Hyperclique Hypothesis} and the \kl{Sparse Boolean Matrix Multiplication Hypothesis}. Then there is an enumeration algorithm for a conjunctive query $q$ with preprocessing time $\softO(m)$ and delay~$\softO(1)$ if and only if $q$ is free-connex acyclic.
\end{theorem}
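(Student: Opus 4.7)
The plan is to prove this as a dichotomy in the standard way, by combining an existing upper bound with two separate lower bound cases already established in the paper. For the ``if'' direction, I would invoke the classical algorithm of Bagan, Durand, and Grandjean~\cite{BaganDG07}, which shows that every free-connex acyclic conjunctive query admits constant-delay enumeration after linear preprocessing; this gives the desired algorithm without further work.

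For the ``only if'' direction (read under the standing self-join-free assumption of this subsection, as with Theorems~\ref{thm:embeddingEnum} and~\ref{thm:booleanLower}), I would split into two cases depending on the structural position of $q$. First, if $q$ is cyclic, I would observe that any enumeration algorithm with preprocessing $\softO(m)$ and delay $\softO(1)$ can be used to decide the Boolean query $q'$ obtained from $q$ by projecting all variables away: run the preprocessing on the given database, then ask the enumeration routine for a first tuple; the total time is $\softO(m)$. Since projecting out variables preserves cyclicity of the hypergraph (the hypergraph of $q'$ is the same as that of $q$), Theorem~\ref{thm:booleanLower} under the \kl{Triangle Hypothesis} and \kl{Hyperclique Hypothesis} rules this out.

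Second, if $q$ is acyclic but not free-connex, then Theorem~\ref{thm:embeddingEnum} directly forbids such an enumeration algorithm under the \kl{Sparse Boolean Matrix Multiplication Hypothesis}. Combining the two cases shows that a $q$ with an enumeration algorithm of the stated form must be both acyclic and free-connex, completing the dichotomy.

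The main obstacle here is almost entirely bookkeeping rather than mathematics, since the three component results have already been established. The only subtle point I would be careful about is the reduction from ``enumeration'' to ``decision'' in the cyclic case: one must make sure that the timing argument works when the enumeration algorithm may never produce a tuple (it suffices to run it for a fixed constant time after preprocessing and declare ``no'' if no tuple has appeared, because the delay bound is a true worst-case constant independent of the database). A second minor point is to make explicit that the result is stated for self-join-free queries, matching the hypotheses of Theorems~\ref{thm:booleanLower} and~\ref{thm:embeddingEnum}, as the paper has already flagged the self-join case as substantially more subtle.
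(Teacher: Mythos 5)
Your proposal is correct and follows essentially the same route the paper implicitly takes: the upper bound is the free-connex result of~\cite{BaganDG07}, the cyclic case follows by using the enumerator to decide the Boolean projection and invoking Theorem~\ref{thm:booleanLower}, and the acyclic-but-not-free-connex case is exactly Theorem~\ref{thm:embeddingEnum}. The paper leaves all of this implicit (it only says the characterization ``directly'' follows), so your added care about the self-join-free caveat and about the bounded-delay timeout when there is no answer is a genuine, if small, improvement in rigor; note only that in this paper's $\softO$ convention the delay is $m^{o(1)}$ rather than a literal constant, but that changes nothing in the $\softO(m)$ total time accounting.
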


There is much more to say about constant delay enumeration then we can do here, but since much of it has been surveyed elsewhere, we refer the reader to~\cite{Segoufin14,Segoufin15,BerkholzGS20}. 

\subsection{Direct Access}\label{sct:directaccesslinear}
 A restricted version of enumeration is \emph{direct access} where the aim is to simulate an array containing the whole query result without necessarily materializing it. After preprocessing the input database for a restricted time, in the \emph{access phase}, given an integer $i$, the algorithm must return the $i$th element of the (simulated) array containing the query result; if there are less than $i$ answers, the algorithm must return an error. The time required to return an answer at the requested position~$i$ is called the \emph{access time} of the algorithm.
 
 Clearly, direct access algorithms easily allow enumeration of the query result: after preprocessing, simply request the outputs one after another in order by starting with $i=1$ and incrementing until there is an error because no answers are left. Since it is known that free-connex acyclic queries allow direct access with logarithmic access time and linear preprocessing~\cite{BraultBaron13,CarmeliTGKR23}, we directly get the following characterization.

\begin{theorem}\label{thm:DAwithoutOrder}
    Assume the \kl{Triangle Hypothesis}, the \kl{Hyperclique Hypothesis} and the \kl{Sparse Boolean Matrix Multiplication Hypothesis}. Then there is a direct access algorithm for a self-join free conjunctive query $q$ with preprocessing time $\softO(m)$ and delay~$\softO(1)$ if and only if $q$ is free-connex acyclic.
\end{theorem}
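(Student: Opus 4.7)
The plan is to derive Theorem~\ref{thm:DAwithoutOrder} essentially as a corollary of the enumeration dichotomy already established in Theorem~\ref{thm:EnumLower}, using the simple observation that direct access is at least as expressive as constant delay enumeration.

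For the upper bound I would simply quote the known positive results: if $q$ is free-connex acyclic, then by the algorithms of~\cite{BraultBaron13,CarmeliTGKR23} there is a direct access procedure with preprocessing time $\softO(m)$ and access time $\softO(1)$ (in fact logarithmic, which we absorb into the $\softO$). So no new algorithmic content is needed in this direction.

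For the lower bound I would argue by contradiction. Suppose $q$ is self-join free and not free-connex acyclic, but admits a direct access algorithm $\mathcal{A}$ with preprocessing $\softO(m)$ and access time $\softO(1)$. I would then convert $\mathcal{A}$ into an enumeration algorithm: run the preprocessing of $\mathcal{A}$ once, and in the enumeration phase call $\mathcal{A}$ on $i=1,2,3,\ldots$ in turn, printing each returned answer and halting as soon as $\mathcal{A}$ reports that no $i$th answer exists. Since consecutive outputs correspond to consecutive calls of $\mathcal{A}$, the delay between two printed answers is bounded by the access time, so this yields an enumeration algorithm for $q$ with preprocessing $\softO(m)$ and delay $\softO(1)$. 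By Theorem~\ref{thm:EnumLower}, the existence of such an algorithm under the \kl{Triangle Hypothesis}, the \kl{Hyperclique Hypothesis}, and the \kl{Sparse Boolean Matrix Multiplication Hypothesis} would contradict the assumption that $q$ is not free-connex acyclic, concluding the argument.

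Because the lower bound reduces cleanly to the enumeration case, there is no genuine obstacle; the only point that requires care is making sure the reduction from direct access to enumeration is faithful, i.e.\ that the termination condition (receiving an ``error'' from $\mathcal{A}$ on the first $i$ exceeding the number of answers) can itself be detected in $\softO(1)$ time, which is immediate from the definition of the access phase. With that observation in place, the two directions together give the claimed characterization.
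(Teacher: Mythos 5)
Your argument matches the paper's own justification essentially word for word: the paper, immediately before the theorem statement, observes that direct access can be used to enumerate by querying $i=1,2,\ldots$ until an error, and then cites the known positive direct-access result for free-connex acyclic queries, so both directions are obtained exactly as you describe. The proposal is correct and takes the same route as the paper.
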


Theorem~\ref{thm:DAwithoutOrder} does not specify the order in which the answers are stored in the simulated array. However, it is natural to assume an order on the answers, which, as we will see, has great impact on the the complexity of the problem.  Many different ways of defining orders on query have been studied see e.g.~\cite{DeepHK22,TziavelisAGRY20,AmarilliBCM24}; we will only consider two types of orders here: lexicographic orders and orders given by sums of weights of domain values. We will discuss them in individual subsections below.

\subsubsection{Lexicographic Orders}

One natural way of ordering query answers is by lexicographic orders. To do so, we assume that there is an order $\preccurlyeq$ on the active domain on the database. Then for every order $\preceq$ of the variables of the query, we get an induced order on tuples: given distinct tuples $a$ and $b$, we let $x$ be the first variable in $\preceq$ on which $a$ and $b$ differ. Then we say $a \le b$ if the entry on position $x$ in $a$ is before that of $b$ w.r.t.~$\preccurlyeq$. Note that the order on the tuples is different for different variable orders.

\begin{example}
    Let the active domain be $\{0,1\}$ with $0\preccurlyeq 1$. Consider a query $q(x,y)$ and the query result $\{(0,0), (0,1), (1,0), (1,1)\}$. If the variable order has $x\preceq y$, then the answers are sorted as $(0,0)\le  (0,1) \le (1,0) \le (1,1)$. If we have that $y\preceq x$, then $(0,0)\le  (1,0) \le (0,1) \le (1,1)$.
\end{example}

It will be convenient to consider the \emph{testing problem} for conjunctive queries which is the following for any fixed conjunctive query $q$: given a database $D$, after preprocessing $D$ for a restricted time, the algorithm has to answer, given a tuple $a$, if $a\in q(D)$. Again, we separate preprocessing time and query time. Note that testing for join queries is always easy, since we only have to check if $a$ is consistent with the relations of all atoms of $q$. On the other end of the spectrum, for Boolean conjunctive queries, the only possible input $a$ is the empty assignment, so the testing problem is equivalent to deciding Boolean queries. We will be interested in the middle ground where some but not all variables are projected.

The following is useful for direct access lower bounds, see e.g.~\cite{BringmannCM22}. We denote by $\|D\|$ the size of a database $D$.

\begin{lemma}\label{lem:DAtoTesting}
    Let $q(x_1, \ldots, x_r)\datarule R_1(X_1)\land \ldots \land R_\ell(X_\ell)$ be a join query and let $\pi$ be a variable order for $q$. Let $X'$ be a variable set consisting of a prefix of $\pi$. Then, if there is a direct-access algorithm for $q(x_1, \ldots, x_k)$ respecting the lexicographic order induced by $\pi$ with preprocessing time $p(m)$ and access time $a(m)$ for databases of size~$m$, then there is a testing algorithm for the projected query $q(X')$ with the same body with preprocessing time $p(m)$ and access time $\softO(\log(M(m)) a(m))$ where $M(m)$ is an upper bound with $M(m) \ge \max_{D:\|D\| \le m}(|q(D)| \mid \|D\| \le m)$.
\end{lemma}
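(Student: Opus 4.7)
The plan is to reduce the testing problem for $q(X')$ to direct access on the full join query $q$ via binary search, re-using the same preprocessing. I will first run the given direct-access preprocessing on $D$, which costs $p(m)$ time and yields the data structure supporting $a(m)$-time access to the $i$-th answer in the lexicographic order induced by $\pi$.

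Next, given a candidate tuple $t$ over $X'$ for the testing phase, I will exploit the fact that $X'$ is a prefix of $\pi$. This prefix condition is the crucial structural ingredient: the induced lex order on $q(D)$ compares tuples first on their $X'$-projections (in the sub-order of $\pi$ restricted to $X'$) and only then on the remaining variables. Consequently, all answers in $q(D)$ whose $X'$-projection equals $t$ occupy a contiguous block in the array simulated by the direct-access structure, and $t \in q(X')(D)$ iff this block is nonempty.

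To detect whether this block is nonempty, I will binary search on the index $i \in \{1, 2, \ldots, M(m)\}$ for the smallest $i^{*}$ such that the $i^{*}$-th answer exists and its $X'$-projection is $\succcurlyeq t$ (in the lex order induced by $\pi$ restricted to $X'$). Each step probes one index by calling the direct-access routine in time $a(m)$, projecting the returned tuple onto $X'$, and performing a constant-time comparison with $t$; an ``error'' response (fewer than $i$ answers exist) is treated as ``greater than $t$''. After $O(\log M(m))$ probes I either find no such $i^{*}$, in which case $t \notin q(X')(D)$, or I find $i^{*}$ and accept iff the $X'$-projection of the $i^{*}$-th answer equals $t$. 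The total access time is $O(\log(M(m))\, a(m)) = \softO(\log(M(m))\, a(m))$, as required, and the preprocessing time is unchanged at $p(m)$.

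I do not expect a serious technical obstacle: correctness of the binary search relies solely on the monotonicity of $X'$-projections along the sorted array, which is immediate from the prefix assumption. The only point deserving care is the boundary behavior, namely handling ``out of range'' accesses uniformly as ``$\succ t$'' so that the invariant of the search is preserved even when the actual number of answers is strictly less than the upper bound $M(m)$.
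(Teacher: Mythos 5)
Your proposal is correct and takes the same approach as the paper: exploit that a prefix $X'$ of $\pi$ makes the $X'$-projections non-decreasing along the simulated array, so tuples agreeing on $X'$ form contiguous blocks, and binary-search for a block with projection $t$ using $O(\log M(m))$ direct-access probes. The paper's proof is just a terser version of this argument, without spelling out the boundary handling you describe.
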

\begin{proof}
    Since $X'$ contains the variables of a prefix, for every assignment $a$ to $X'$ all tuples consistent with $a$ in $q(D)$ are in a contiguous block of the array simulated by the direct access algorithm. Also, the order of those blocks is given by the lexicographic order on $X'$, so we can use binary search to check if a non-empty block for a given assignment to $X'$ exists. This directly gives the claimed runtime bound.
\end{proof}

Since we are interested in fixed conjunctive queries, we can bound $\log(M(m)) \le O(\log(m))$, so we only lose a logarithmic factor when going from direct access to testing.

It turns out that understanding the query~$q^\star_2$ is crucial to the understanding of lexicographic direct access due to the following lower bound. 

\begin{lemma}\label{lem:testingLB}
    Assuming the \kl{Triangle Hypothesis}, there is no testing algorithm for $q^\star_2$ with preprocessing time $\softO(m)$ and testing time $\softO(1)$ on databases of size $m$.
\end{lemma}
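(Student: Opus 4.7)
The plan is a direct reduction from triangle detection. Given a graph $G=(V,E)$ with $m$ edges as input for the triangle problem, I would build the database $D$ for $q^\star_2$ by letting $R^D$ contain both orientations of each edge, i.e.\ $R^D := \{(u,v), (v,u) \mid \{u,v\}\in E\}$. This database has size $\Theta(m)$. Under this encoding, a pair $(a,b)$ satisfies $q^\star_2$ on $D$ (i.e.\ there exists $z$ with $(a,z),(b,z) \in R^D$) if and only if $a$ and $b$ have a common neighbor in $G$. Since $G$ is assumed loopless, any such witness $z$ is distinct from both $a$ and $b$.

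Now suppose for contradiction that there is a testing algorithm for $q^\star_2$ with preprocessing time $\softO(m)$ and testing time $\softO(1)$. The algorithm for triangle detection proceeds as follows. First, preprocess $D$ in time $\softO(m)$. Then iterate over every edge $\{a,b\} \in E$ and invoke the testing procedure on the pair $(a,b)$. If any query returns true, witnessed by some common neighbor $z$, then $\{a,b,z\}$ forms a triangle in $G$; conversely, every triangle $\{a,b,z\}$ of $G$ will be detected when the edge $\{a,b\}$ is enumerated, because $z$ is a common neighbor of $a$ and $b$. Thus the procedure correctly decides triangle existence.

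The total runtime is $\softO(m)$ for preprocessing plus $m \cdot \softO(1) = \softO(m)$ for the $m$ test queries, yielding an $\softO(m)$ algorithm for triangle detection on graphs with $m$ edges. This contradicts the Triangle Hypothesis, completing the proof. The argument is essentially a matter of observing that the testing problem for $q^\star_2$ is exactly the ``common neighbor'' question, and the only step requiring any care is ensuring that a positive answer on an edge $(a,b)$ really corresponds to a triangle (which follows from loopless ness of $G$) and that the $m$ testing calls do not accumulate more than polylogarithmic overhead (which is given by the assumed $\softO(1)$ access time).
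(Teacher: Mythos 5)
Your proof is correct and takes essentially the same approach as the paper: encode the graph as the relation $R$, preprocess once, then fire one testing query per edge $\{a,b\}$, for $\softO(m)$ total. The only (harmless) differences are presentational — you make explicit the symmetrization of the edge relation and the looplessness caveat, details the paper leaves implicit — and your phrase ``witnessed by some common neighbor $z$'' slightly overstates what a yes/no testing oracle returns, though the argument only needs existence of the witness, not its value.
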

\begin{proof}
    By way of contradiction, assume that an algorithm with preprocessing time $\softO(m)$ and testing time $\softO(1)$ exists. We use it to detect triangles in graphs as follows: given a graph $G=(V,E)$, construct a database $D$ for $q^\star_2$ by setting $R^D = E$. Apply the preprocessing on $D$. Afterwards, for every $(a,b)\in E$ test if $(a,b)\in q^\star_2(D)$. If there is such an edge, return true, otherwise false.
    
    The algorithm is correct, since there is an edge $(a,b)\in E$ with $(a,b)\in q^\star_2(D)$ if and only if there is a triangle $(a,b,c)$ in $G$. To bound the runtime, first observe that the preprocessing takes time $\softO(\|D\|)= \softO(|E|)$. Then we make $|E|$ queries, each in time $\softO(1)$. So the overall runtime of the algorithm is $\softO(|E|)$, which contradicts the \kl{Triangle Hypothesis}.
\end{proof}

Embedding $q^\star_2$, we get a version of Theorem~\ref{thm:DAwithoutOrder} without the \kl{Sparse Boolean Matrix Multiplication Hypothesis}, if we insist on lexicographic orders.
\begin{corollary}\label{thm:DAlex1}
    Assume the \kl{Triangle Hypothesis} and the \kl{Hyperclique Hypothesis}. Then there is a direct access algorithm for a conjunctive query $q$ for a lexicographic order with preprocessing time $\softO(m)$ and delay~$\softO(1)$ if and only if $q$ is free-connex acyclic.
\end{corollary}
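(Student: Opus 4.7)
The plan is to invoke the known direct-access algorithms for free-connex acyclic queries for the ``if'' direction, then to prove the ``only if'' direction by a case split on whether $q$ is cyclic. For the upper bound, \cite{BraultBaron13,CarmeliTGKR23} give preprocessing $\softO(m)$ and access time $\softO(\log m) = \softO(1)$ respecting any prescribed lexicographic order, which is exactly what the statement demands. The cyclic subcase of the lower bound is immediate: a direct access algorithm with preprocessing $\softO(m)$ and access time $\softO(1)$ decides in $\softO(m)$ total time the Boolean query obtained from $q$ by projecting out all free variables, simply by running the preprocessing and requesting position $i = 1$ and returning ``true'' iff no error is raised. Since this Boolean query inherits the cyclicity of $q$, this contradicts Theorem~\ref{thm:booleanLower} under the \kl{Triangle Hypothesis} and the \kl{Hyperclique Hypothesis}.

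For the acyclic, non-free-connex subcase I would combine Lemma~\ref{lem:DAtoTesting} with Lemma~\ref{lem:testingLB}, re-using the embedding already exploited in Theorem~\ref{thm:embedding}. Non-free-connexness of $q$ guarantees a pair of free variables $x_1, x_2$ whose minimal separator in every join tree of $q$ necessarily passes through an existentially quantified variable; this is precisely the structural witness that, following~\cite{BaganDG07,Mengel21}, turns any instance of $q^\star_2$ into a database on which the projection of $q$ onto $\{x_1, x_2\}$ coincides with the answer set of $q^\star_2$ on the input. Fix a lexicographic order $\pi$ on the free variables of $q$ whose first two positions are $x_1$ and $x_2$, and assume for contradiction a direct access algorithm for $q$ respecting $\pi$ with preprocessing $\softO(m)$ and access time $\softO(1)$. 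Applying Lemma~\ref{lem:DAtoTesting} yields a testing algorithm for the projected query $q(x_1, x_2)$ with preprocessing $\softO(m)$ and testing time $\softO(\log m) = \softO(1)$, which composed with the embedding above produces a testing algorithm for $q^\star_2$ with the same asymptotic bounds, contradicting Lemma~\ref{lem:testingLB} under the \kl{Triangle Hypothesis}.

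The delicate step is the last one: the embedding of~\cite{BaganDG07,Mengel21}, originally built for lower bounds on counting and enumeration, must be checked to transport the testing problem faithfully. Concretely, one has to verify that a candidate pair $(a,b)$ lies in $q(D)$ on the constructed database if and only if $(a,b)$ lies in the answer set of $q^\star_2$ on the input instance, i.e.\ that no spurious witnesses appear through the atoms of $q$ outside the separator. Verifying this amounts to extending database tuples on the variables outside $\{x_1, x_2\}$ by dummy values in the spirit of the construction in Proposition~\ref{prop:embed-triangle}, and using the non-free-connex witness to argue that these dummy extensions are always consistent with the remaining atoms. Once this bookkeeping is in place, the argument closes exactly as in Theorem~\ref{thm:embedding}, but with testing, rather than enumeration or counting, as the downstream primitive.
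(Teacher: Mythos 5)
Your high-level plan matches the paper's: upper bound from the known free-connex direct-access algorithm, lower bound by a case split using Theorem~\ref{thm:booleanLower} for the cyclic case and Lemmas~\ref{lem:DAtoTesting} and~\ref{lem:testingLB} together with a $q^\star_2$-embedding for the acyclic non-free-connex case. However, there is a genuine gap in the last step. The ``only if'' direction of the corollary must rule out a direct-access algorithm with the stated bounds for \emph{every} lexicographic order on the free variables of $q$, since the statement is existential over orders. You instead \emph{fix} an order $\pi$ in which the two embedding variables $x_1, x_2$ come first and derive a contradiction only against an algorithm respecting that particular $\pi$; this leaves entirely open the possibility of a fast algorithm respecting some other order. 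The repair is not hard: given an arbitrary order $\pi'$, apply Lemma~\ref{lem:DAtoTesting} to the prefix $X'$ of $\pi'$ ending at the later of $x_1, x_2$, and in the $q^\star_2$-encoding pad every free variable in $X'\setminus\{x_1,x_2\}$ with a single dummy value, as in Proposition~\ref{prop:embed-triangle}; then testing $q(X')$ on tuples carrying those dummies coincides with testing $q^\star_2$. You need to say this explicitly, because it is precisely the step that makes the quantifier over orders come out right.

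Two lesser points. First, the direct-access algorithms of~\cite{BraultBaron13,CarmeliTGKR23} work for a \emph{suitable} lexicographic order compatible with the query structure, not for ``any prescribed lexicographic order''; Theorem~\ref{thm:lexlinear} shows that even acyclic join queries fail for orders with a disruptive trio, so the universal claim is false as written. Since the corollary only needs existence of one order, this does not break the logic, but the phrasing over-claims. Second, Lemma~\ref{lem:DAtoTesting} as stated in the paper is restricted to join queries, while your $q$ has projections; you should observe that the lemma's proof (contiguous blocks per prefix, plus binary search) goes through verbatim for a direct-access structure over the answers of an arbitrary conjunctive query, so the generalisation is immediate but needs to be noted.
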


In the algorithm of Theorem~\ref{thm:DAwithoutOrder} and Corollary~\ref{thm:DAlex1}, we cannot choose the lexicographic order of the variables but it depends on the query. So a natural question is: can we choose any lexicographic order and still get linear preprocessing for direct access? This question was answered negatively in~\cite{CarmeliTGKR23}. A crucial role is played by the following variant of $q^\star_k$:
\begin{align*}
    \hat q^\star_k(x_1, \ldots, x_k, z) \datarule \bigwedge_{i\in [k]} R(x_i,z).
\end{align*}
Note that the only difference between $\hat q^\star_k$ and $q^\star_k$ is that $z$ is not projected in the former. Plugging together Lemma~\ref{lem:DAtoTesting} and Lemma~\ref{lem:testingLB}, we get the following.

\begin{lemma}\label{lem:trio}
    Assuming the \kl{Triangle Hypothesis}, direct access for $\hat q^\star_2$ and the variable order $x_1>x_2>z$ cannot be done on databases of size $m$ with preprocessing time $\softO(m)$ and access time $\softO(1)$.
\end{lemma}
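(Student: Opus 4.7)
The plan is to chain Lemma~\ref{lem:DAtoTesting} with Lemma~\ref{lem:testingLB}, using the variable order $\pi : x_1 > x_2 > z$. I would assume for contradiction that $\hat q^\star_2$ admits a direct access algorithm respecting the lexicographic order induced by $\pi$ with preprocessing $\softO(m)$ and access time $\softO(1)$, and then derive a testing algorithm for the star query $q^\star_2$ that Lemma~\ref{lem:testingLB} forbids.

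The key observation is that the prefix $X' := \{x_1, x_2\}$ of $\pi$ projects $\hat q^\star_2$ onto a query with body $R(x_1, z) \land R(x_2, z)$, which is exactly the star query $q^\star_2(x_1, x_2)$ from Section~\ref{sct:counting}. Applying Lemma~\ref{lem:DAtoTesting} with this $X'$ converts the assumed direct access algorithm into a testing algorithm for $q^\star_2$ with preprocessing $\softO(m)$ and testing time $\softO(\log(M(m)) \cdot 1)$, where $M(m)$ is an upper bound on $|\hat q^\star_2(D)|$ over databases $D$ of size at most $m$.

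The next step is to absorb the logarithmic overhead into the $\softO$ notation. Since $\hat q^\star_2$ is a fixed query of constant arity, we have $M(m) \le m^{O(1)}$, so $\log(M(m)) = O(\log m) = m^{o(1)}$, which disappears inside $\softO(\cdot)$ under the paper's convention that $\softO$ ignores all factors of the form $m^{o(1)}$. The resulting testing algorithm for $q^\star_2$ therefore runs with preprocessing $\softO(m)$ and testing time $\softO(1)$, directly contradicting Lemma~\ref{lem:testingLB} under the \kl{Triangle Hypothesis}.

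I do not expect any genuine obstacle here: the argument is a mechanical composition of two earlier lemmas. The only point worth checking is precisely the one just flagged, namely that the binary-search overhead from Lemma~\ref{lem:DAtoTesting} is subpolynomial and hence swallowed by $\softO$; once this is noted, the contradiction is immediate.
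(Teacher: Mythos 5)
Your proposal is correct and matches the paper's own proof exactly: both chain Lemma~\ref{lem:DAtoTesting} with Lemma~\ref{lem:testingLB} using the prefix $\{x_1,x_2\}$, and the paper likewise notes (just after Lemma~\ref{lem:DAtoTesting}) that the $\log(M(m))$ overhead is only logarithmic for a fixed query and hence absorbed by $\softO$.
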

\begin{proof}
    By Lemma~\ref{lem:DAtoTesting}, direct access for $\hat q^\star_2$ and the order $x_1>x_2>z$ allows testing for $q^\star_2$ which by Lemma~\ref{lem:testingLB} requires more than quasi-linear preprocessing.
\end{proof}

Lemma~\ref{lem:trio} motivates the following definition: let $q$ be a join query and $\preceq$ an order of its variables. Three variables $y_1, y_2, y_3$ are called a \emph{disruptive trio} if $y_1 \preceq y_3$, $y_2 \preceq y_3$, the pairs $y_1, y_3$ and $y_2, y_3$ each appear in the scope of a common atom of $q$ and $y_1, y_2$ do not appear in the scope of a common atom.

It is not hard to see that if a self-join free join query has a disruptive trio with respect to an order $\preceq$, then one can embed $\hat q^\star_2$ into it: define all relations such that all variables not in the disruptive trio are fixed to constants, and on the trio simulate the relations $R_1$ and $R_2$. It can be shown that such an embedding is actually also possible if the query contains self-joins~\cite{BringmannCM22}. As a consequence, disruptive trios make direct access hard for join queries. More work is required to show that if there is no disruptive trio, then the query allows efficient direct access after linear preprocessing, which is shown in~\cite{CarmeliTGKR23}. Together, this gives the following characterization.

\begin{theorem}\label{thm:lexlinear}
    Assume the \kl{Triangle Hypothesis} and the \kl{Hyperclique Hypothesis}. Then there is a direct access algorithm for a join query $q$ for a lexicographic order induced by an order $\preceq$ on databases of size $m$ with preprocessing time $\softO(m)$ and access time~$\softO(1)$ if and only if $q$ is acyclic and has no disruptive trio with respect to $\preceq$.
\end{theorem}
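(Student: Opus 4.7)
The plan is to handle the two directions separately, reusing essentially all of the technology assembled earlier in this subsection. The upper bound is a direct appeal to the direct-access construction of~\cite{CarmeliTGKR23}: whenever $q$ is acyclic and has no disruptive trio with respect to $\preceq$, their algorithm yields a data structure with preprocessing $\softO(m)$ and access time $\softO(1)$ for the lexicographic order induced by $\preceq$, so no new work is needed on that side.

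For the lower bound, I would split on how the characterization can fail. If $q$ is cyclic, then any direct-access algorithm with the stated complexity decides in time $\softO(m)$ whether $q(D)$ is non-empty, simply by running the preprocessing, requesting the tuple at position~$1$, and checking whether an error is returned. Projecting away all free variables turns $q$ into a cyclic Boolean conjunctive query with the same hypergraph, which under the \kl{Triangle Hypothesis} and the \kl{Hyperclique Hypothesis} contradicts Theorem~\ref{thm:booleanLower} in the self-join free case and its extension from~\cite{BringmannCM22} in general.

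If $q$ is acyclic but contains a disruptive trio $y_1, y_2, y_3$ with respect to $\preceq$, I would embed $\hat q^\star_2(x_1, x_2, z)$ along $x_1 \mapsto y_1$, $x_2 \mapsto y_2$, $z \mapsto y_3$: use an atom of $q$ that contains $y_1, y_3$ to encode the first copy of $R$, an atom that contains $y_2, y_3$ to encode the second copy, and fill every other atom so that all variables outside the trio are fixed to a single dummy constant. Because $y_1$ and $y_2$ share no atom, no spurious constraint between $x_1$ and $x_2$ is introduced, and the image of the reduction produces exactly the answers of $\hat q^\star_2$ on the simulated input. Since $y_1 \preceq y_3$ and $y_2 \preceq y_3$, the restriction of $\preceq$ to the trio puts $y_3$ last, matching the order of Lemma~\ref{lem:trio}; an $\softO(m)$/$\softO(1)$ direct-access algorithm for $q$ and $\preceq$ would thus yield the same for $\hat q^\star_2$ with the forbidden order, contradicting that lemma.

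The main obstacle in this plan is the embedding step when $q$ has repeated relation symbols: a dummy tuple placed in one atom automatically appears in every atom sharing the same symbol, so the naive construction above can create unintended joins between occurrences of the trio variables and the dummy-filled parts. I would resolve this by invoking the more careful reduction of~\cite{BringmannCM22}, which handles this interference and produces the required hardness uniformly for join queries with or without self-joins.
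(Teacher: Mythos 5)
Your proposal matches the paper's intended proof structure exactly: invoke the algorithm of~\cite{CarmeliTGKR23} for the upper bound, use Lemma~\ref{lem:trio} via an embedding of $\hat q^\star_2$ along a disruptive trio for one half of the lower bound, and fall back on Theorem~\ref{thm:booleanLower} for the cyclic case (the paper only discusses the disruptive-trio case explicitly before the theorem, but the cyclic case is implicitly needed and handled exactly as you describe). Your identification of the self-join obstacle in the embedding and its resolution via~\cite{BringmannCM22} also mirrors the paper's remark. The one point to be cautious about is your claim that the cyclic-case hardness also extends to queries with self-joins via~\cite{BringmannCM22}: the paper earlier notes that there are cyclic join queries \emph{with} self-joins whose answers can be enumerated (and hence decided) in linear time, so the Boolean lower bound cannot extend to arbitrary self-join queries, and the theorem as stated is best read with a self-join free caveat for that branch — but this is a subtlety of the theorem statement itself rather than a flaw in your argument.
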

We remark that~\cite{CarmeliTGKR23} in fact shows a version of Theorem~\ref{thm:lexlinear} that has the \kl{Sparse Boolean Matrix Multiplication Hypothesis} as an additional assumption. This is because the authors there use Theorem~\ref{thm:embeddingEnum} in the lower bound which makes use of this assumption. By using the more direct approach of Lemma~\ref{lem:testingLB}, we could avoid introducing it for Theorem~\ref{thm:lexlinear}.

\subsubsection{Sum Orders}

We will now consider another way of ordering the tuples in a query result, this time by the \emph{sum order} which is defined as follows: assign a weight $w(a)$ to every element of the domain of the database, then the weight $w(a_1, \ldots, a_r)$ of a tuple $(a_1, \ldots, a_r)$ is the sum $\sum_{i\in [r]} w(a_i)$ of weights of its entries. The query result is then ordered by the weight defined this way. This type of orders is more expressive than lexicographic orders and has for example been considered in~\cite{DeepHK22,TziavelisAGRY20,AmarilliBCM24}. We will here present work from~\cite{CarmeliTGKR23}, however, in contrast to there, to simplify the presentation, we will only consider join queries, so queries without projections. We will see that the class of queries that allows linear time preprocessing direct access is extremely restricted in this setting.

We will use an additional hypothesis. The \emph{3SUM problem} is, given as input three lists $A$, $B$, $C$ of length $n$ consisting of integers in $\{-n^4,\ldots, n^4\}$, to decide if there are $a\in A$, $b\in B$ and $c\in C$ such that $a+b =c$. The 3SUM problem has an easy algorithm with complexity $\softO(n^2)$: compute the set $S:= \{a+b\mid a\in A, b\in B\}$ and sort it. Then going over the $S$ and $C$ in increasing order, check if there is a common element in both lists. While this runtime can be slightly improved, see e.g.~the discussion in~\cite{VassilevskaWilliams2018}, there is no known algorithm with runtime $\softO(n^{2-\varepsilon})$ for any constant $\varepsilon > 0$.

\AP \begin{hypothesis}[\intro{3SUM Hypothesis}]
    There is no algorithm for the 3SUM problem with runtime $\softO(n^{2-\varepsilon})$ for any $\varepsilon>0$.
\end{hypothesis}

The \kl{3SUM Hypothesis} was first introduced in~\cite{GajentaanO95} where it was connected to the complexity of many problems from computational geometry. Since then it has become one of the most used and important hypotheses in fine-grained complexity, see again~\cite{VassilevskaWilliams2018} for a discussion.

The following connection of 3SUM to direct access is from~\cite{CarmeliTGKR23}, in slightly different formulation.

\begin{lemma}\label{lem:sum}
    Assume the \kl{3SUM Hypothesis}. Let $q$ be a self-join free join query such that there are two variables $x$, $y$ such that no atom of $q$ contains both $x$ and $y$. Then there is no $\varepsilon> 0$ such that there is a direct access algorithm for $q$ on databases of size $m$ with sum order, preprocessing time $\softO(m^{2-\varepsilon})$ and access time $\softO(m^{1-\varepsilon})$.
\end{lemma}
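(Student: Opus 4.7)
The plan is to reduce 3SUM to the direct-access problem for $q$. Given an instance $(A,B,C)$ of 3SUM with $|A|=|B|=|C|=n$, I will build a database $D$ of size $O(n)$ for $q$ and choose weights on the domain so that the answers of $q$ on $D$ are in bijection with $A\times B$ and the answer corresponding to $(a,b)$ has sum-weight exactly $a+b$. Then, for each $c\in C$, a binary search over the sorted output array, implemented via the direct-access oracle at midpoints, decides whether some answer has weight $c$, which is precisely the 3SUM question for $c$.

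For the construction, I introduce a fresh dummy element $d$ and set each relation $R_i^D$ according to which of $x,y$ appear in $X_i$. If $x\in X_i$ and $y\notin X_i$, then $R_i^D$ consists of all tuples that have some $a\in A$ at the coordinate of $x$ and $d$ at every other coordinate; symmetrically with $B$ if $y\in X_i$ and $x\notin X_i$; and if neither $x$ nor $y$ is in $X_i$, then $R_i^D=\{(d,\ldots,d)\}$. These cases are exhaustive since by hypothesis no atom contains both $x$ and $y$, and each relation has $O(n)$ tuples, so $\|D\|=O(n)$ as $q$ is fixed. A routine check, going atom by atom, shows that the answers of $q$ on $D$ are precisely the assignments $\sigma$ with $\sigma(x)\in A$, $\sigma(y)\in B$, and $\sigma(v)=d$ for every other variable $v$, giving $|q(D)|=n^2$. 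Setting $w(a)=a$ for $a\in A$, $w(b)=b$ for $b\in B$, and $w(d)=0$ then assigns each such answer sum-weight exactly $a+b$.

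Assuming a direct-access algorithm with preprocessing $\softO(m^{2-\varepsilon})$ and access time $\softO(m^{1-\varepsilon})$, I first run the preprocessing on $D$ in time $\softO(n^{2-\varepsilon})$. Then for each $c\in C$ I perform a binary search over the sorted array of size $n^2$: at each step I access the middle index of the current range via the direct-access oracle, compute in constant time the sum-weight of the returned tuple, and branch accordingly, reporting success if the weight equals $c$. Each search uses $O(\log n)$ accesses at cost $\softO(n^{1-\varepsilon})$ each, so the $n$ searches together cost $\softO(n^{2-\varepsilon})$. Combined with preprocessing, this solves 3SUM in time $\softO(n^{2-\varepsilon})$, contradicting the \kl{3SUM Hypothesis}.

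The main thing to get right is the database construction and the associated bijection between $q(D)$ and $A\times B$; the crucial property that makes this possible is precisely the hypothesis that $x$ and $y$ share no atom, which lets us decouple their contributions into independent "slots" in each relation without any atom forcing a relationship between them. Once this is in place, the sum order turns the existence of $(a,b)\in A\times B$ with $a+b=c$ into a single binary-search question, and everything else is bookkeeping about the sizes.
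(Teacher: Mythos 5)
Your proof is correct and follows essentially the same route as the paper: reduce from 3SUM by building a linear-size database where answers are in bijection with $A\times B$ with sum-weight $a+b$, then binary-search with the direct-access oracle for each $c\in C$. The only difference is cosmetic: you use a fresh dummy element $d$ with weight $0$, whereas the paper uses the domain element $0$ directly, which amounts to the same thing.
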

\begin{proof}
    We reduce from 3SUM, so let $A$, $B$, $C$ be the input lists. We construct a database $D$ with domain $A\cup B\cup \{0\}$. In all relations, the variables outside of $x$ and $y$ can only take the value $0$, while $x$ can take all values in $A$ and $y$ all values in $B$. Since $x$ and $y$ do not appear in the scope of a common atom, the resulting database has size $O(n)$. As weight function $w$, we simply set $w(d) = d$ for all elements in the domain.
    
    By construction, $q(D)$ contains a tuple of weight $c$ if and only if there are $a\in B, b\in B$ such that $a+b=c$. We solve 3SUM as follows: first, run the preprocessing on $D$, then iterate over all elements $c\in C$ and check if $q(D)$ contains a tuple of weight $c$. To do the latter, use binary search with the help of $O(\log(n))$ access queries. Assuming preprocessing time $\softO(m^{2-\varepsilon})$ and access time $\softO(m^{1-\varepsilon})$, the overall runtime is $O(n^{2-\varepsilon})$ which breaks the \kl{3SUM Hypothesis}.
\end{proof}

One can show that the only acyclic queries that are not hard by Lemma~\ref{lem:sum} are those containing an atom which has all variables in its scope. This is because for acyclic hypergraphs the size of the smallest edge cover and that of the biggest independent set are equal~\cite[Lemma~19]{DurandM14}, so if the variables are not all in the scope of one atom, then the criterion of Lemma~\ref{lem:sum} applies. But for queries in which all variables are in one atom, we can easily materialize the query result and sort it in time~$\softO(m)$ and thus get efficient direct access. So we get the following characterization.

\begin{theorem}\label{thm:sumdicho}
    Assume the \kl{3SUM Hypothesis}. Then there is a direct access algorithm for an acyclic self-join free join query $q$ for sum order on databases of size $m$ with preprocessing time $\softO(m)$ and delay~$\softO(1)$ if and only if $q$ contains an atom which has all variables of $q$ in its scope.
\end{theorem}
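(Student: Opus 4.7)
The plan is to argue both directions of the equivalence separately. The ``if'' direction follows from materialization plus sorting; the ``only if'' direction combines a standard structural property of acyclic hypergraphs with Lemma~\ref{lem:sum}.

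For the ``if'' direction, suppose some atom $R_i(X_i)$ of $q$ has $X_i$ equal to the full variable set of $q$. Since $q$ is acyclic and self-join free, I apply a linear-time full reducer (Yannakakis-style, in the spirit of Theorem~\ref{thm:yannakakis}) to semi-join reduce $R_i^D$ to exactly those tuples of $R_i^D$ that extend to a full answer of $q$ on $D$; because $X_i$ covers every variable of $q$, these tuples are precisely the elements of $q(D)$. This preprocessing runs in $\softO(m)$. I then compute each tuple's weight, sort by weight in $\softO(m)$, and store the sorted list in an array, so that access queries are served in $O(1)$ by array indexing.

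For the ``only if'' direction, suppose $q$ is acyclic and no single atom contains all variables of $q$. I invoke the property cited just before the theorem, namely that in an acyclic hypergraph the minimum edge cover and the maximum independent set have equal size~\cite[Lemma~19]{DurandM14}, where an independent set is a set of vertices no two of which lie together in a common edge. Since no atom covers all variables, every edge cover of the query hypergraph uses at least two edges, so the maximum independent set has size at least two; that is, there exist two variables $x$, $y$ of $q$ that appear together in no atom. Lemma~\ref{lem:sum} then applies to $q$, and under the \kl{3SUM Hypothesis} it rules out any direct-access algorithm for $q$ under sum order with preprocessing $\softO(m^{2-\varepsilon})$ and access $\softO(m^{1-\varepsilon})$, and in particular the regime of preprocessing $\softO(m)$ with delay $\softO(1)$.

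The main obstacle is to correctly bridge the structural statement (no single atom contains all variables) with the combinatorial hypothesis of Lemma~\ref{lem:sum} (existence of two variables with no shared atom), which is precisely where the min edge cover / max independent set identity for acyclic hypergraphs intervenes. The other ingredients are either routine (materialize, sort, binary index) or already established (the 3SUM reduction in Lemma~\ref{lem:sum} itself), so the bulk of the work is absorbed into the earlier lemma.
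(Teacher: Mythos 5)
Your proof is correct and matches the paper's argument essentially verbatim: the upper bound is by materializing and sorting (the paper doesn't even bother with the full-reducer step, since filtering $R_i^D$ against the other atoms by hashing already gives the result), and the lower bound is exactly the edge-cover/independent-set duality for acyclic hypergraphs from~\cite[Lemma~19]{DurandM14} combined with Lemma~\ref{lem:sum}.
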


\section{Beyond Linear Time}\label{sct:superlinear}

In this section, we will discuss to which extent the results seen before can be extended to showing superlinear lower bounds. Unfortunately, there are so far less powerful techniques for this, in particular for Boolean queries. There are mainly two problems we would have to overcome to show lower bounds similar to those of Section~\ref{sec:linear}: first, we would need an analogue of Theorem~\ref{thm:brault-baron} that shows that in ``hard'' instances we can always embed certain restricted and well-behaved sub-instances. Second, we need plausible hardness assumptions on these sub-instances.

Unfortunately, there seem to be no good answers to either of these problems that are as generally useful as in the linear time case. That said, there are techniques for both that allow us to still infer interesting results. We will sketch some of them in the next two sub-sections. After that, we will quickly mention results for counting and direct access that allow showing good lower bounds with slightly different techniques.

\subsection{Clique Variants}\label{sct:clique}

In many settings in complexity theory, the Clique problem has served as a starting point for hardness, see e.g.~\cite{Karp72} and~\cite[Chapter 13]{CyganFKLMPPS15}. Therefore, it is natural to base fine-grained lower bounds for query answering also on this problem.
To this end, it would be convenient to have a very strong hardness assumption for $k$-Clique, saying that $k$-Clique cannot be solved in time $O(n^{k-\varepsilon})$. Unfortunately, it has long been observed that this statement is false.

\begin{theorem}\cite{NesetrilP85}\label{thm:kclique}
    For every $k$, there is an algorithm with runtime $\softO(n^{\omega \lfloor\frac{k}{3}\rfloor+i})$ for $k$-clique where $i$ is the rest of $k$ after division by $3$.
\end{theorem}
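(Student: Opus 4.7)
The plan is to reduce $k$-clique detection to triangle detection in an auxiliary ``meta-graph'' whose vertices are smaller cliques, which then allows us to apply matrix-multiplication-based triangle finding. Write $k = 3q+i$ with $q=\lfloor k/3 \rfloor$ and $i\in \{0,1,2\}$. First I would peel off the remainder by brute force: enumerate all $n^i$ ordered $i$-tuples of vertices, check whether each forms an $i$-clique, and in each case restrict attention to the common neighborhood $N$. This reduces the task to finding a $3q$-clique in a graph on at most $n$ vertices, at the cost of a multiplicative factor of $n^i$.

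For the remaining $3q$-clique subproblem I would construct an auxiliary graph $G'$ whose vertices are the $q$-cliques of the residual graph. There are at most $\binom{n}{q} = O(n^q)$ such $q$-cliques, and they can be enumerated in time $\softO(n^q)$ by iterating over all $q$-subsets of vertices and checking adjacency. Two $q$-cliques $A, B$ are joined by an edge in $G'$ exactly when $A \cap B = \emptyset$ and every vertex of $A$ is adjacent to every vertex of $B$ in the original graph. Building $G'$ amounts to performing this test for $O(n^{2q})$ pairs, which fits in $\softO(n^{\omega q})$ since $\omega \ge 2$.

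The key correspondence I need is that $G'$ contains a triangle $\{A, B, C\}$ if and only if the input contains a $3q$-clique: a triangle in $G'$ consists of three pairwise disjoint $q$-cliques with all cross-edges present, whose union is a $3q$-clique; conversely any $3q$-clique admits a partition into three $q$-subsets that form such a triangle. Triangle detection in a graph on $N$ vertices can be done in time $\softO(N^\omega)$ by squaring the adjacency matrix and intersecting its support with the edge set (a dense version of the argument underlying Theorem~\ref{thm:AYZ}), so with $N = O(n^q)$ we spend $\softO(n^{\omega q})$ here. Combining the outer $n^i$ enumeration with the inner $\softO(n^{\omega q})$ cost yields the claimed runtime $\softO(n^{\omega q + i})$.

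The main subtle point is ensuring that a triangle in $G'$ actually yields three disjoint $q$-cliques whose union has $3q$ vertices; I handle this by baking disjointness directly into the edge relation of $G'$. All other steps are routine counting and a black-box application of fast matrix multiplication, so I expect no serious obstacle beyond this bookkeeping.
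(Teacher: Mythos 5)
Your proof is correct and follows essentially the same route as the paper's sketch: build the auxiliary graph on $q$-cliques, join two $q$-cliques when they are disjoint and their union is a $2q$-clique, and apply matrix-multiplication-based triangle detection. The paper only sketches the case $i=0$; your outer enumeration over $i$-tuples is a clean and correct way to fill in the general case and recovers exactly the stated bound.
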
    
\begin{proof}
    We will only sketch the idea for $i=0$. We set $r:= \frac{k}{3}$. Let $G$ be an input graph. We construct~a new graph $G'$ whose vertices are the $r$-cliques of $G$. Two such cliques $C_1, C_2$ are connected by an edge if and only if they are disjoint and form a clique of size $2r$. There are at most $\binom{n}{r} = O(n^{\frac{k}{3}})$ potential vertices for $G'$ to consider and thus $O(n^{\frac{2k}{3}})$ potential edges in $G'$. It follows that we can construct $G'$ easily within the allowed time-bound.
    
    Now observe that the triangles in $G'$ are exactly the $k$-cliques in $G$. So we can use the matrix multiplication based algorithm for triangles as in the second part of the proof of Theorem~\ref{thm:AYZ} to decide in time $\softO(n^{\omega r})= \softO(n^{\frac{\omega k}{3}})$ if $G$ contains a $k$-clique.
\end{proof}

We remark that the exponents for $k$-clique that we get from Theorem~\ref{thm:kclique} by plugging in the best known bounds for $\omega$ can be slightly improved in the cases $i\in \{2,3\}$~\cite{EisenbrandG04}.

In the light of Theorem~\ref{thm:kclique}, $k$-Clique as such does not seem to be a good candidate for tight lower bound proofs, even though it is still useful in settings that have known matrix multiplication based algorithms, see e.g.~\cite{AbboudBW18}. There are two main ways of modifying the clique problem to make it more useful in our setting: either restricting the computational model or using harder variants of $k$-Clique. We will quickly discuss both of these approaches in the remainder of this section.

\subsubsection{Restricted Computation: Combinatorial Algorithms}

There is a line of work that considers only restricted, so-called \emph{combinatorial} algorithms for $k$-Clique and related problems. This notion is not formally defined, but mostly motivated by the fact that the only results that beat $\softO(n^k)$-time exhaustive search algorithms for $k$-Clique by a factor of $n^\varepsilon$ for any constant $\varepsilon > 0$ are all based on efficient matrix multiplication. This state of affairs is often considered as unsatisfying: algorithms using fast matrix multiplication are in a sense uninterpretable since intermediate values computed in them have no apparent interpretation in terms of the original problem. Also, these algorithms often do not generalize to problem variants, see the next section. Finally, it has long been argued that the asymptotically best known algorithms are impractical due to huge hidden constants, see e.g.~\cite{Gall12}, which makes all algorithms relying on them also impractical.

As a consequence of all these issues, many works consider combinatorial algorithms which, while the concept is not formally defined, are often informally understood to not use any algebraic cancellations and have reasonable constants, see e.g.~\cite[Section~1.2]{AbboudFS24} for a detailed discussion. A very popular conjecture for combinatorial algorithms is the following.

\AP \begin{hypothesis}[\intro{Combinatorial $k$-Clique Hypothesis}]
    Combinatorial algorithms cannot solve $k$-Clique in time $\softO(n^{k-\varepsilon})$ on graphs with $n$ vertices for any $\varepsilon>0$ and $k\ge 3$.
\end{hypothesis}

There are analogous hypotheses for combinatorial algorithms for other problems such as Boolean matrix multiplication and triangle finding, see~\cite{WilliamsW18}. Moreover, the best known combinatorial algorithms are faster than the naive approach and have runtime~$o(n^k)$; the current record is $O(n^k (\log(n))^{-(k+1)}\log\log(n)^{k+3})$~\cite{AbboudFS24}.

The \kl{Combinatorial $k$-Clique Hypothesis} has been used as a starting point to rule out combinatorial algorithms, see again~\cite{AbboudFS24} for references. In database theory, it has been used to argue that known combinatorial algorithms for e.g.~cycle joins and Loomis-Whitney joins are optimal~\cite{FanKZ23}.

So what should we take away from this line of work? First, lower bounds for combinatorial algorithms are certainly valuable in the sense that they give us insights on algorithms that one would actually like to implement in practice. Also, since clique is often a convenient problem to reduce from, lower bounds for combinatorial algorithms are often easier to get than lower bounds for general algorithms. That said, since ``combinatorial algorithms'' have no formal, generally agreed upon definition, the reader might feel somewhat uneasy to show these lower bounds. Mathematically speaking, what is it that is even shown? Also, there are situations in which there are (non-combinatorial) algorithms that are better than the lower bounds for combinatorial algorithms. We have seen this for $k$-clique, but the same is true for other problems, e.g.~parsing context free languages~\cite{Valiant75}. Similarly, there are matrix multiplication based algorithms for directed cycle detection that are better than any combinatorial algorithm~\cite{YusterZ04,DalirrooyfardVW21}. So in conclusion, while lower bounds based on the \kl{Combinatorial $k$-Clique Hypothesis} are useful, if possible one should at least complement them with lower bounds for general algorithms based on other credible assumptions. For example, we have seen this for Loomis-Whitney joins in Section~\ref{sct:boolean}.

\subsubsection{Harder Variants of $k$-Clique}\label{sct:harderclique}

Since, as we have seen, $k$-Clique is in a sense ``not hard enough'', several variants of $k$-Clique have been introduced that are presumably harder. We will sketch some here.

One variant of $k$-Clique that we have already seen in Section~\ref{sct:boolean} is the Hyperclique problem and the corresponding \kl{Hyperclique Hypothesis}. We have see there that this conjecture was useful for Loomis-Whitney joins and thus all cyclic queries. Unfortunately, hypergraphs are often harder to embed into other problems, so weighted variants of $k$-Clique on graphs have been studied.

To this end, we consider graphs $G=(V,E)$ that have an additional edge weight function $w:E\rightarrow \mathbb{Z}$. We consider two problems on such weighted graphs: Min-Weight-$k$-Clique is the problem, given an edge weighted graph $G$, to compute a $k$-clique $C$ such that the sum of the edge weights of $C$ is minimal. The Zero-$k$-Clique problem is, given an edge weighted graph $G$, to decide if $G$ contains a $k$-clique $C$ such that the sum of the edges weights in $C$ is $0$. It is widely conjectured that the addition of edge weights makes $k$-Clique harder and that both problems do not have algorithms that are much better than $\softO(n^k)$.

\AP \begin{hypothesis}[\intro{Min-Weight-$k$-Clique Hypothesis}]
    There is no algorithm that solves Min-Weight-$k$-Clique in time $\softO(n^{k-\varepsilon})$ on graphs with $n$ vertices for any $\varepsilon>0$ and $k\ge 3$.
\end{hypothesis}

\AP \begin{hypothesis}[\intro{Zero-$k$-Clique Hypothesis}]
    There is no algorithm that solves Zero-$k$-Clique in time $\softO(n^{k-\varepsilon})$ on graphs with $n$ vertices for any $\varepsilon>0$ and $k\ge 3$.
\end{hypothesis}

Observe that there are also node-weighted $k$-clique variants defined analogously to the problems above. However, for those problem there are matrix multiplication based algorithms that are faster than $\softO(n^k)$~\cite{CzumajL09,WilliamsW13}.

The \kl{Zero-$k$-Clique Hypothesis} sometimes also goes under the name \emph{Exact-Weight-$k$-Clique Hypothesis}. 
Both of the above conjectures have been used extensively in fine-grained complexity, for some recent pointers into the literature see e.g.~\cite{BringmannFHKKR24}. We will restrict ourselves here to applications in database theory.

The \kl{Min-Weight-$k$-Clique Hypothesis} is useful in the setting of aggregate queries over semirings~\cite{KhamisNR16}. To simplify the setting, let us only consider queries over the tropical $(\min, +)$-semiring, a setting which we quickly sketch here: consider a join query $q(X)\datarule R_1(X_1), \ldots, R_\ell(X_\ell)$. Let $D$ be a weighted database, i.e.~a database in which all tuples $t$ are assigned weights $w(t)$ from $\mathbb{R}\cup \{\infty\}$. We then extend this weight to query answers in $q(D)$ by making for $a\in q(D)$ the definition $w(a) := w(\pi_{X_1}(a)) + \ldots + w(\pi_{X_\ell}(a))$ where $\pi_{X_i}$ denotes projection onto the set $X_i$. The answer to the aggregate query is then defined as $\min_{a\in q(D)}(w(a))$.

If we consider the join query $q_k(x_1, \ldots, x_k) \datarule \bigwedge_{i,j\in [k], i\ne j} E(x_i, x_j)$, we get a reformulation of the Min-Weight-$k$-Clique problem, so hardness follows directly from the \kl{Min-Weight-$k$-Clique Hypothesis}. More interestingly, this can be combined with the so-called clique embeddings which we will discuss in Section~\ref{sct:embeddings}.

For applications of the \kl{Zero-$k$-Clique Hypothesis}, see Section~\ref{sct:superlinearDA}

\subsection{Clique Embeddings}\label{sct:embeddings}

Having seen several hardness assumptions for the clique problem, we will in this section sketch on an example how \emph{clique embeddings} for queries are defined and can be used to apply the hardness assumptions to more general queries. The techniques sketched here are essentially taken from~\cite{FanKZ23} and build on ideas from~\cite{Marx13}.

The idea is that we will let the variables of a query $q$ choose vertices in a graph that potentially might belong to a clique and then construct a database to use the query to check if all necessary edges between vertices are present in the input graph. To do so, we have to decide which variable in the query is responsible for choosing which vertex. To this end, we choose an assignment $\psi$ of the vertices of a clique $K_\ell$ to vertices in the hypergraph $H=(V,E)$ of the query $q$ we want to analyze. This assignment $\psi$ has the following properties:
\begin{enumerate}
    \item\label{1} every vertex $x_i$ of $K_\ell$ can be mapped to several vertices of $H$, so $\psi$ maps to the subsets of $V$. To make sure that the choice of nodes in the variables is consistent, we require that $\psi(x_i)$ is connected in $H$ for every vertex of $x_i$ of $K_\ell$.
    \item\label{3} for every pair $x_i, x_j$ of $K_\ell$, we have that $\psi(x_i) \cap \psi(x_j) \neq \emptyset$ or there exists some $e \in E$ such that $\psi(x_i) \cap e \neq \emptyset$ and $\psi(x_i) \cap e \neq \emptyset$. 
\end{enumerate}

\begin{example}\label{ex:5cycleembedding}
    Consider the $5$-cycle query 
    \begin{align*}
    q^\circ_5 \datarule R_1(v_1, v_2), \ldots, R_4(v_4, v_5), R_5(v_5, v_1).
    \end{align*}
    We embed a $5$-clique on vertices $x_1, \ldots, x_5$ by setting
    \begin{align*}
    \psi(x_1) &:= \{v_1, v_2, v_3\},\\
    \psi(x_2) &:= \{v_2, v_3, v_4\},\\
    \psi(x_3) &:= \{v_3, v_4, v_5\},\\
    \psi(x_4) &:= \{v_4, v_5, v_1\},\\
    \psi(x_5) &:= \{v_5, v_1, v_2\}
    \end{align*}
    This mapping is visualized in Figure~\ref{fig:cycle} and has the properties we described above.
\end{example}

\begin{figure}
\begin{tikzpicture}[thick]
    \def\r{2} 
    

\node[draw] (v1) at (-90:\r) {$v_1 : x_1, x_2, x_3$};    
\node[draw] (v2) at (-18:\r) {$v_2 : x_2, x_3, x_4$};    
\node[draw] (v3) at (54:\r) {$v_3 : x_3, x_4, x_5$};    
\node[draw] (v4) at (126:\r) {$v_4 : x_4, x_5, x_1$};    
\node[draw] (v5) at (198:\r) {$v_5 : x_5, x_1, x_2$};    
\draw (v1) -- (v2); 
\draw (v2) -- (v3); 
\draw (v3) -- (v4); 
\draw (v4) -- (v5); 
\draw (v5) -- (v1); 
\end{tikzpicture}
\caption{Visualization of the embedding in Example~\ref{ex:5cycleembedding}. In each node of the cycle, we first give the name of the node and then, separated by a colon, the list of vertices of the $5$-clique that are mapped to it.}\label{fig:cycle}
\end{figure}
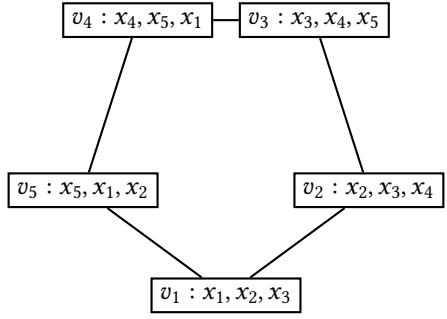

Having fixed $\psi$, we now show how to use $q$ to solve clique problems: given an input graph $G= (V_G,E_G)$, we construct a database $D$. The domain of a variable $v$ of $q$ is $V_G^{\psi^{-1}(v)}$, so intuitively,~$v$ choses a vertex of $G$ for every node$x_i$ in the clique $K_\ell$ with $v\in \psi(x_i)$. The relations of $D$ are constructed to satisfy two criteria: first, for every atom, if $v,u\in \psi(x_i)$ for some $x_i$ and $v,u$ are variables of the atom, then $v$ and $u$ must choose the same value for $x_i$. Because of property (\ref{1}) of $\psi$, it follows that all variables of $q$ make a consistent choice for all vertices $x_i$ of $K_\ell$. Second, whenever $x_i, x_j$ are mapped to variables $v, u \in e$ where $e$ is an edge of $G$ (here the case $v=u$ is possible), we make sure that the choice of vertices from $V_G$ is such that they are connected by an edge. Because of property (\ref{3}) of $\psi$, we get that the choices for all $x_i$ of $K_\ell$ are such that every two chosen vertices must be connected by an edge.

Defining the database like this, the answers to $q$ are essentially the $\ell$-cliques in $G$. So the idea is that, if we are trying to solve a problem that is hard for cliques, we can get a lower bound for the same problem on $q$. The exact lower bound depends on three quantities:
\begin{itemize}
    \item the hypothesis on the hardness of the problem on cliques that we make (see the previous section),
    \item the number of vertices from $K_\ell$ that are mapped into the same edge of $H$, since this determines the size of the corresponding relation, and
    \item the size $\ell$ of the clique that $\psi$ embeds.
\end{itemize}

\begin{example}\label{ex:embeddingpower}
    Consider again the embedding from Example~\ref{ex:5cycleembedding}. Let $|V_G| = n_G$. First, note that exactly $4$ variables are mapped to every edge, so the database has size $O(n_G^4)$. We have $\ell= 5$. Let us assume that we want to solve the min-weight $\ell$-clique problem which, assuming the \kl{Min-Weight-$k$-Clique Hypothesis} takes time $n_G^5$. Now assume that we can solve the aggregation problem for $q^\circ_5$ over the tropical semi-ring (see Section~\ref{sct:harderclique})in time $\softO(m^{\frac{5}{4}- \varepsilon})$. Then, by the encoding, we can solve it in $5$-cliques in time~$\softO(m^{\frac{5}{4}})=\softO(n^{4 \cdot \frac{5}{4}- \varepsilon}) =  \softO(n^{5-\varepsilon})$, which contradicts our assumption. So aggregation on the $5$-cycle query cannot be done in time $\softO(m^{\frac{5}{4}-\varepsilon})$ and we have reduced our hardness hypothesis for cliques to the $5$-cycle with the help of the embedding.
\end{example}


The techniques that we have only sketched here can be developed into a measure for queries called \emph{clique embedding power} which allows showing lower bounds similar to that in Example~\ref{ex:embeddingpower} systematically, see~\cite{FanKZ23} for details. For example, this approach allows showing lower bounds for aggregation over the tropical semiring for all cycle queries and Loomis-Whitney joins. Moreover, it can also be used to show lower bounds for the \emph{submodular width} of queries, an important hypergraph width measure that is known to yield algorithms that are optimal in parameterized complexity~\cite{Marx13,Khamis0S17}.

\subsection{Direct Access}\label{sct:superlinearDA}

The query answering task that is best understood out of those that we discuss in this paper, is certainly direct access with lexicographic orders. As we have seen in Section~\ref{sct:directaccesslinear}, the linear preprocessing case was solved in~\cite{CarmeliTGKR23}. In~\cite{BringmannCM22} this was generalized by introducing a parameter called \emph{incompatibility number} that is then shown to be the exact exponent of an optimal direct access algorithm. For details, we refer the reader to~\cite{BringmannCM22}.

Conceptually, the proof in~\cite{BringmannCM22} consists of two steps: in one step, one has to show tight lower bounds for one specific class of queries; in a second step, one has to show how to embed these queries. Concretely, the following generalization of Lemma~\ref{lem:testingLB} is shown for the queries~$\hat q_k^\star$.

\begin{lemma}\label{lem:testingLB2}
    Assuming the \kl{Zero-$k$-Clique Hypothesis}, there is no testing algorithm for $\hat q^\star_k$ with preprocessing time $\softO(m^{k-\varepsilon})$ and testing time $\softO(1)$ on databases of size $m$ for any $k\ge 2$, $\varepsilon>0$. 
\end{lemma}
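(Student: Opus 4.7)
The plan is to generalise the triangle-based reduction of Lemma~\ref{lem:testingLB} to zero-weight $k$-cliques. Suppose for contradiction that there is a testing algorithm for $\hat q^\star_k$ with preprocessing $\softO(m^{k-\varepsilon})$ and testing time $\softO(1)$; I would use it to solve Zero-$k$-Clique on $n$-vertex weighted graphs in time $\softO(n^{k-\varepsilon'})$ for some $\varepsilon' > 0$, contradicting the \kl{Zero-$k$-Clique Hypothesis}.

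Given a weighted graph $G=(V,E,w)$, I would first apply a standard colour-coding split of $V$ into $k$ classes $V_1,\ldots,V_k$ so that a candidate zero-$k$-clique picks one vertex per class (at sub-polynomial overhead). I would then construct a database $D$ in which $R^D$ stores incidences together with weight certificates, arranged so that the $x_i$'s of $\hat q^\star_k$ range over the candidate vertex from each class while the centre $z$ carries the partial weight needed to enforce the zero-sum condition. The target property is that a tuple $(a_1,\ldots,a_k,c)$ lies in $\hat q^\star_k(D)$ iff $\{a_1,\ldots,a_k\}$ is a zero-weight $k$-clique of $G$ and $c$ is a valid weight-completion certificate.

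After preprocessing $D$ in time $\softO(m^{k-\varepsilon})$, the algorithm iterates over candidate tuples and calls the tester once per candidate. By using the colour partition and the preprocessed index to collapse one degree of freedom, the number of tests can be kept to $\softO(n^{k-1})$, so the total running time is $\softO(m^{k-\varepsilon}+n^{k-1})$. Provided the encoding keeps $m=\softO(n)$, this is $\softO(n^{k-\varepsilon})$, which contradicts the hypothesis.

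The main obstacle is keeping $m$ near-linear: the naive choice $R^D=E$ (with weights attached) yields $m=\Theta(n^2)$, inflating the preprocessing budget to $\softO(n^{2k-2\varepsilon})$, far larger than $n^{k-\varepsilon'}$. Bringing $m$ down to $\softO(n)$ while preserving the bijection between tests and zero-$k$-cliques requires folding both the clique-adjacency and the weight-sum conditions into the centre variable $z$, most plausibly via a hashing/bucketing of the weight range combined with a factorisation along the colour classes, in the spirit of the degree-split trick from the proof of Theorem~\ref{thm:AYZ}. Once the encoding is in place, the exponent bookkeeping matching $m^{k-\varepsilon}$ to $n^{k-\varepsilon'}$ is a direct calculation analogous to the $k=2$ case.
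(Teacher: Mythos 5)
The paper itself does not prove this lemma: it explicitly defers to~\cite{BringmannCM22} and remarks that the proof ``is far more involved than that of Lemma~\ref{lem:testingLB} and is thus beyond the scope of this paper.'' So there is no in-paper proof to compare against. Evaluated on its own, your proposal is a reasonable identification of the reduction target, but it is not a proof: the step you yourself flag as the ``main obstacle'' is precisely the technical content of the lemma, and it is left unresolved. The difficulty is not, as in Lemma~\ref{lem:testingLB}, that one can simply set $R^D:=E$ and let $z$ play the role of a common neighbour; here $z$ must somehow encode \emph{all} $\binom{k}{2}$ pairwise-adjacency checks and the global zero-sum condition, while keeping the relation $R$ near-linear in $n$ so that the preprocessing budget $\softO(m^{k-\varepsilon})$ stays below $n^{k}$. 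Your phrase ``hashing/bucketing of the weight range combined with a factorisation along the colour classes'' gestures at the right family of techniques, but no concrete encoding is given, and it is not evident that such an encoding exists with the parameters you assume (in particular $m=\softO(n)$ \emph{and} only $\softO(n^{k-1})$ tests). The actual argument in~\cite{BringmannCM22} involves a delicate balancing of group sizes and weight-hashing that cannot be dismissed as ``a direct calculation analogous to the $k=2$ case.''

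There is also a smaller but real issue in the framing. As stated, $\hat q^\star_k$ has all of $x_1,\ldots,x_k,z$ as output variables, so its testing problem is trivial: given $(a_1,\ldots,a_k,c)$, one just checks $(a_i,c)\in R$ for each $i$, which costs $O(1)$ after $O(m)$ preprocessing. The non-trivial testing problem, and the one Lemma~\ref{lem:testingLB} actually uses, is for the projected query $q^\star_k$ (equivalently, for the prefix $\{x_1,\ldots,x_k\}$ of $\hat q^\star_k$ in the sense of Lemma~\ref{lem:DAtoTesting}). Your reduction implicitly wants to test whether a $k$-tuple $(a_1,\ldots,a_k)$ admits \emph{some} completion $c$, which is testing for $q^\star_k$; you should say so, and also justify why $\softO(n^{k-1})$ such tests suffice to detect a zero-weight $k$-clique --- ``collapsing one degree of freedom'' via the weight constraint pins down the required weight of the missing edge, not the missing vertex, so this step needs an argument as well.
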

Unfortunately, the proof of Lemma~\ref{lem:testingLB2} is far more involved than that of Lemma~\ref{lem:testingLB} and is thus beyond the scope of this paper. 
Very similar techniques are used in~\cite{BringmannCM22} to show a variant of Theorem~\ref{thm:enumlower} that does not depend on the Triangle and Hyperclique Hypotheses.

\begin{theorem}
        Assume the \kl{Zero-$k$-Clique Hypothesis}. Then there is~an enumeration algorithm for a self-join free join query $q$ with preprocessing time~$\softO(m)$ and constant delay on instances of size $m$ if and only if $q$ is acyclic.
\end{theorem}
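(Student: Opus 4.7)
The statement is a dichotomy, so the proof splits into an upper bound for acyclic queries and a conditional lower bound for cyclic queries.

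For the \emph{if} direction (acyclic $\Rightarrow$ good enumeration), the plan is simply to invoke the classical algorithm of Bagan~\cite{Bagan09}, which gives $\softO(m)$ preprocessing and $\softO(1)$ delay for any acyclic self-join free join query; this is exactly the upper bound cited for Theorem~\ref{thm:enumlower}, and no new ingredient is required here.

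For the \emph{only if} direction, the plan is to follow the same two-step template as for Theorem~\ref{thm:enumlower}, but replace the Triangle/Hyperclique-based base cases with a Zero-$k$-Clique-based base case. First, I would use the Brault-Baron structural result (Theorem~\ref{thm:brault-baron}) to locate inside the hypergraph of any cyclic self-join free query $q$ a substructure $S$ that is either an induced cycle or (after deleting edges fully contained in others) a uniform hyperclique. Second, I would reuse the padding-and-embedding argument of Proposition~\ref{prop:embed-triangle}, suitably generalized, to reduce, for every cyclic $q$, the enumeration problem on the corresponding cycle or Loomis-Whitney query to the enumeration problem for $q$, with only linear blow-up in the database size.

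The main obstacle is the base case: I need to prove that, under the Zero-$k$-Clique Hypothesis alone, neither the cycle queries $q^\circ_k$ nor the Loomis-Whitney queries $q^{LW}_k$ admit an enumeration algorithm with $\softO(m)$ preprocessing and $\softO(1)$ delay. The naive attempt, namely using the first enumerated answer as a Boolean decider, is too weak: an $\softO(m)$ decision bound for $q^\triangle$ corresponds to $\softO(n^2)$ on an $n$-vertex weighted instance, which is compatible with the $n^{3-o(1)}$ lower bound coming from Zero-$3$-Clique. To overcome this, I would use the same machinery that underlies Lemma~\ref{lem:testingLB2} from~\cite{BringmannCM22}: encode a Zero-$k$-Clique instance into a database for the target query in such a way that each cycle or hyperclique in the input corresponds to exactly one answer, partition weights into carefully chosen buckets so that individual zero-sum witnesses sit in a prescribed, small output class, and then exploit the constant-delay guarantee to binary-search through output buckets after the $\softO(m)$ preprocessing. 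The accounting should yield total time $\softO(n^{k-\varepsilon})$ for Zero-$k$-Clique, contradicting the hypothesis.

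Combining the two directions gives the claimed characterization; the subtle part is exclusively the base reduction just described, which is why the paper defers the detailed argument to~\cite{BringmannCM22}.
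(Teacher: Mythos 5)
Your high-level structure (Bagan's upper bound; Brault-Baron decomposition into an induced cycle or uniform hyperclique; an embedding à la Proposition~\ref{prop:embed-triangle}; a Zero-$k$-Clique-based base case) is the right skeleton, and it matches the route the paper attributes to~\cite{BringmannCM22}, which the survey does not spell out in detail.

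However, your handling of the base case contains a genuine error. Since the theorem concerns \emph{join} queries, every variable is free, so an enumeration algorithm with $\softO(m)$ preprocessing and $\softO(1)$ delay immediately yields a $\softO(m)$ algorithm for \emph{deciding whether $q(D)$ is non-empty}: run the preprocessing and check whether a first answer is produced. That decision route is exactly the one to take, and you dismiss it prematurely. The reason you give---that mapping an $n$-vertex weighted graph to an $n^2$-size database cannot contradict the $n^{3-o(1)}$ bound from Zero-$3$-Clique---is a statement about the \emph{obvious} encoding, not about the decision approach itself. The reduction in~\cite{BringmannCM22} from Zero-$k$-Clique to Boolean cycle/hyperclique detection is more involved: it groups the $k$ clique vertices, blows up the vertex set according to partial-clique choices, and uses careful weight hashing so that the resulting unweighted instance has size roughly $n^{k-\delta}$ for some $\delta>0$. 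With such a reduction, a $\softO(m)$ non-emptiness decider for the cycle/Loomis-Whitney query already contradicts the Zero-$k$-Clique Hypothesis, and the embedding step then transfers this to all cyclic self-join free join queries.

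Your proposed substitute, namely to ``binary-search through output buckets after the $\softO(m)$ preprocessing,'' does not type-check for enumeration: constant-delay enumeration offers no positional or random access into the answer set, so there is nothing to binary-search over. That primitive belongs to \emph{direct access} (and is indeed how Lemma~\ref{lem:testingLB2} is used in Lemma~\ref{lem:DAtoTesting} and Lemma~\ref{lem:trio}), not to enumeration. The correct takeaway from Lemma~\ref{lem:testingLB2} here is the underlying Zero-$k$-Clique encoding machinery, applied to the non-emptiness decision, not the testing/binary-search interface built on top of it.
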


This gives even more confidence that the characterization of Theorem~\ref{thm:enumlower} is likely correct.

\subsection{Counting for Acyclic Queries}

For counting for acyclic conjunctive queries, one can get a better understanding of the superlinear time cases. The idea is that one can can define a measure called \emph{quantified star size} that generalizes the property of being free-connex. This notion was first introduced in~\cite{DurandM14} for acyclic queries and then developed further in~\cite{DurandM13,ChenM15,DellRW19}. We will not give the slightly involved definition here; intuitively, it measures the size of the biggest query $q_k^\star$ from Section~\ref{sct:counting} that can be embedded into a query. Using Lemma~\ref{lem:stars}, we then get a lower bound.

\begin{theorem}
    Let $q$ be a self-join free conjunctive query of quantified star size
    $k$. Then, assuming that SAT has no algorithm with runtime $2^{n(1-\delta)}$ for any
    $\delta > 0$, there is no algorithm that counts the solutions of $q$ on a database with $m$
    tuples in time $m^{k-\varepsilon'}$ for any $\varepsilon'> 0$.
\end{theorem}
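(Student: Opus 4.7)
The plan is to chain three reductions: from $k$-SAT to $k'$-Dominating Set (for some sufficiently large $k'$) via Theorem~\ref{thm:PatrascuW10}, from $k'$-DS to counting $q^\star_k$ via Lemma~\ref{lem:stars}, and finally from counting $q^\star_k$ to counting $q$ via an embedding that witnesses the quantified star size. The last step is the only new ingredient, so I would concentrate the argument there. Suppose for contradiction that $q$ admits a counting algorithm running in time $\softO(m^{k-\varepsilon'})$ for some $\varepsilon' > 0$. My aim is to convert any database $D^\star$ of size $m^\star$ for $q^\star_k$ into a database $D$ for $q$ of size $O(\mathrm{poly}(m^\star))$ so that $|q(D)|$ can be decoded, in time proportional to the construction, to $|q^\star_k(D^\star)|$.

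The embedding is dictated by the definition of quantified star size: by hypothesis, there is a subset $Y$ of the existentially quantified variables of $q$ and a ``central'' variable $z$ such that the subhypergraph of $q$ induced on $Y \cup \{z\}$ contains a copy of the star $q^\star_k$ with leaves realized by $k$ distinct variables among the free variables of $q$ (these play the role of $x_1,\dots,x_k$). All other variables of $q$ will be collapsed onto a single dummy constant $\bot$. Concretely, for every atom $R(X_j)$ of $q$ that corresponds to one of the $k$ star edges, I set $R^D := \{(a, b) : (a,b) \in R^{D^\star}\}$ with the two arguments placed appropriately (extended by $\bot$ on any coordinate that is neither a star leaf nor $z$); for every other atom, I set its relation to contain only the tuple where each coordinate has the value $\bot$, except that coordinates corresponding to free variables are replaced by the active domain of $D^\star$ so that we do not lose any counted assignments. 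A short bookkeeping check shows that $|q(D)| = c \cdot |q^\star_k(D^\star)|$ for some constant $c$ depending only on $q$ (reflecting the number of ways of assigning the non-star variables), and $|D| = O((m^\star)^d)$ where $d$ is a constant depending only on $q$.

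Plugging in, the purported algorithm for $q$ runs in time $\softO((m^\star)^{d(k-\varepsilon')})$. To exploit Lemma~\ref{lem:stars}, which requires an $O((m^\star)^{k-\eta})$-time algorithm for $q^\star_k$ with $\eta > 0$, I need the exponent $d(k - \varepsilon')/d = k - \varepsilon'$ to be strictly less than $k$; this is ensured because the embedding is ``tight'' at the star, i.e.\ the database size is in fact $\Theta(m^\star)$ on the star-relevant relations and only the irrelevant ones can be made larger, which we avoid by not padding them unnecessarily. Composing with Lemma~\ref{lem:stars} yields a $k'$-DS algorithm beating $n^{k' - \eta'}$ for some $\eta' > 0$ and some $k' > 3$; by Theorem~\ref{thm:PatrascuW10}, this contradicts the hypothesis on SAT.

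The main obstacle is the embedding step, and specifically making sure that the reduction is genuinely ``size-preserving'' at the level of the star: if the construction of $D$ blew up the star relations superlinearly in $m^\star$, the exponent gained in the simulation would be diluted and the reduction would not contradict Lemma~\ref{lem:stars}. This is precisely what the quantified-star-size machinery of~\cite{DurandM14,DurandM13,ChenM15,DellRW19} is designed to handle; verifying that the formal definition of quantified star size $k$ yields an embedding in which each of the $k$ star atoms is realized by a \emph{single} atom of $q$ (so that its relation inherits the size of the corresponding relation of $D^\star$) is the technical heart of the argument and is where I would invest the bulk of the work.
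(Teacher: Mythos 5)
Your approach matches the paper's own sketch: embed $q^\star_k$ into $q$ via the quantified star size structure and chain through Lemma~\ref{lem:stars} and Theorem~\ref{thm:PatrascuW10} to contradict the SAT hypothesis; the paper likewise defers the embedding details to the reference it cites, so your level of rigor is comparable. That said, a few details in your write-up are muddled and worth fixing before this could be fleshed out. Your description of quantified star size conflates free and quantified variables: the star leaves $x_1,\dots,x_k$ are free, so they cannot lie in $Y\cup\{z\}$ with $Y$ quantified, and in a general acyclic query the star ``center'' corresponds to a connected set of quantified variables rather than a single $z$, all of which the reduction forces to carry the same value. The line ``$d(k-\varepsilon')/d=k-\varepsilon'$'' is a non sequitur --- what you actually need, and eventually state, is that the constructed database has size $O(m^\star)$, i.e.\ the embedding is linear-size; this should be the stated requirement, not a recovery, since if $d>1$ the reduction gains nothing against Lemma~\ref{lem:stars}. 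Finally, the claim that $|q(D)| = c\cdot|q^\star_k(D^\star)|$ with $c$ ``depending only on $q$'' is inconsistent with giving the non-star free variables the full active domain, which would make $c$ depend on $D^\star$; the clean fix is to pin those variables to $\bot$ so that $c=1$.
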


See~\cite{Mengel21} for more details.

For the non-acyclic case, tight lower bounds are mostly not known for the same reasons as for Boolean queries answering. However, the situation is even worse for counting than for decision since there is no apparent way for how to adapt the best known decision algorithms~\cite{Marx13,Khamis0S17} to counting, see~\cite{KhamisCMNNOS20}.

\section{Conclusion}

Let us conclude our short tour of recent lower bound techniques for conjunctive query answering. The aim was to convince the reader that fine-grained methods and hypotheses are a valuable new tool for database theorists. Let us stress that, while many of the reductions we presented here are relatively short and easy, this is certainly not the case for most results in fine-grained complexity in general. Quite to the contrary, many papers in the area are long and technical; the presentation here was merely restricted to rather simple techniques and ideas to showcase the area. However, the complicatedness of fine-grained complexity should not discourage us from approaching and integrating this area into our work: on the one hand, we have seen that relatively simple techniques can already lead to interesting results, so we should push them as far as we can. On the other hand, fine-grained complexity provides an exciting opportunity to learn interesting new techniques, adapt them and add them to our toolkit.

To keep this paper reasonably short, we have only focused conjunctive query answering in this survey and not mentioned other applications of fine-grained complexity in database theory. One important area that we neglected is query answering under updates for which many tight bounds are known, see e.g.~\cite{BerkholzKS17,BerkholzKS18,KaraNNOZ19,KaraNOZ20,KaraNOZ23}. There is also recent work that instead of only time studies also space bounds~\cite{ZhaoDK23}. Beyond this, queries with negation have been studied in~\cite{Brault-Baron12,BraultBaron13,ZhaoFOK24,CapelliCIS23}. There are also highly nontrivial results on answering UCQs\cite{BringmannC25,CarmeliK21}. Finally, lower bounds for regular path queries have been studied in~\cite{CaselS23}.

\bibliography{slides/pods}
          
\end{document}